 \newcommand{\beqn}{\begin{eqnarray}}
 \newcommand{\eeqn}{\end{eqnarray}}
 \newcommand{\be}{\begin{equation}}
 \newcommand{\ee}{\end{equation}}
 \newcommand{\ba}{\begin{array}}
 \newcommand{\ea}{\end{array}}
 \newcommand{\pa}{\partial}
 \newcommand{\re}{\ref}
 \newcommand{\ci}{\cite}
 \newcommand{\ds}{\displaystyle}
 \newcommand{\la}{\label}
 \newcommand{\rIm}{{\rm Im\5}}
 \newcommand{\fr}{\frac}
\newcommand{\ti}{\tilde}
\newcommand{\ve}{\varepsilon}
\newcommand{\vp}{\varphi}
\newcommand{\de}{\delta}
\newcommand{\cD}{{\cal D}}
\newcommand{\cH}{{\cal H}}
\newcommand{\cL}{{\cal L}}
\newcommand{\cM}{{\cal M}}
\newcommand{\cO}{{\cal O}}
\newcommand{\cP}{{\cal P}}
\newcommand{\al}{\alpha}
\newcommand{\ga}{\gamma}
\newcommand{\si}{\sigma}
\newcommand{\om}{\omega}
\newcommand{\na}{\nabla}
\newcommand{\Si}{\Sigma}
\newcommand{\lam}{\lambda}
\newcommand{\Lam}{\Lambda}
\newcommand{\5}{{\hspace{0.5mm}}}
\newcommand{\R}{\mathbb{R}}
\newcommand{\C}{\mathbb{C}}
\renewcommand{\theequation}{\thesection.\arabic{equation}}
\renewcommand{\thesection}{\arabic{section}}
\renewcommand{\thesubsection}{\arabic{section}.\arabic{subsection}}
\newtheorem{theorem}{Theorem}[section]
\renewcommand{\thetheorem}{\arabic{section}.\arabic{theorem}}
\newtheorem{defin}[theorem]{Definition}
\newtheorem{lemma}[theorem]{Lemma}
\newtheorem{remark}[theorem]{Remark}
\newtheorem{cor}[theorem]{Corollary}
\newtheorem{pro}[theorem]{Proposition}
\newcommand{\supp}{\mathop{\rm supp}\nolimits}
\newcommand{\bd}{\begin{defin}}
 \newcommand{\ed}{\end{defin}}
\newcommand{\bt}{\begin{theorem}}
 \newcommand{\et}{\end{theorem}}
\newcommand{\bp}{\begin{pro}}
 \newcommand{\ep}{\end{pro}}
\newcommand{\bl}{\begin{lemma}}
\newcommand{\el}{\end{lemma}}
\newcommand{\bc}{\begin{cor}}
\newcommand{\ec}{\end{cor}}
\newcommand{\br}{\begin{remark} }
\newcommand{\er}{\end{remark}}
\begin{document}
\begin{titlepage}

\begin{center}
{\Large\bf
Weighted  decay for \bigskip\\
magnetic  Schr\"odinger  equation}
\vspace{1cm}
\\
{\large A.~I.~Komech}
\footnote{
Supported partly by the Alexander von Humboldt
Research Award and by the Austrian Science Fund (FWF): P22198-N13
}\\
{\it Faculty of Mathematics Vienna University\\
and Institute for Information Transmission Problems RAS}\\
 e-mail:~alexander.komech@univie.ac.at
\medskip\\

{\large E.~A.~Kopylova}
\footnote{Supported partly by the 
Austrian Science Fund (FWF): M1329-N13
}\\
{\it Faculty of Mathematics Vienna University \\
and Institute for Information Transmission Problems RAS}\\
e-mail:~elena.kopylova@univie.ac.at
\end{center}

\date{}

\vspace{0.5cm}
\begin{abstract}
\noindent
We obtain a dispersive long-time decay in weighted 
norms for solutions of  3D   Schr\"odinger equation with generic
magnetic and scalar potentials.
The decay extends the results obtained by Jensen and
Kato for the  Schr\"odinger equation without magentic potentials.
For the proof we develop the spectral theory of Agmon, Jensen and Kato,
extending the high energy decay of the resolvent to the magnetic
Schr\"odinger equation.
\smallskip

\noindent
{\em Keywords}:  long-time decay,  weighted 
norms, magnetic  Schr\"odinger equation,
high energy decay of resolvent.
\smallskip

\noindent
{\em 2000 Mathematics Subject Classification}: 35L10, 34L25, 47A40, 81U05
\end{abstract}

\end{titlepage}

\setcounter{equation}{0}
\section{Introduction}
We establish a dispersive long time decay
for the solutions to 3D  magnetic  Schr\"odinger equation
\be\la{SE}
  i\dot\psi(x,t)=H\psi(x,t):=[-i\na-A(x)]^2\psi(x,t)+V(x)\psi(x,t),
\quad x\in\R^3
\ee
in weighted  norms.
For $s,\si\in\R$,
denote by $\cH^s_\si=\cH^s_\si (\R^3)$
the weighted Sobolev spaces introduced by Agmon, \ci{A},
with the finite norms
\be\la{norm}
  \Vert\psi\Vert_{\cH^s_\si}=\Vert\langle x
\rangle^\si\langle\na\rangle^s\psi\Vert_{L^2(\R^3)}<\infty,
\quad\quad \langle x\rangle=(1+|x|^2)^{1/2}.
\ee
We will also denote $L^2_\si=\cH^0_\si$.
We assume that $V(x)\in C(\R^3)$, $A_j\in C^2(\R^3)$ are
real  functions, and for some $\beta>3$ and $\beta_1>2$ the bounds hold
\be \label{V}
  |V(x)|+|A(x)|+|\na A(x)|
  \le C\langle x\rangle^{-\beta},\quad x\in\R^3,
\ee
\be \label{V1}
|\na\na A(x)|\le C \langle x\rangle^{-\beta_1},\quad x\in\R^3.
\ee
We restrict ourselves to the ``regular case'' in the terminology of
\cite{jeka} (or ``nonsingular case'' in  \cite{M}),
where the truncated resolvent of the
 operator $H$  is bounded at the edge point
$\lam=0$  of the continuous spectrum. In other words, the point
$\lam=0$ is neither eigenvalue nor resonance for the operator $H$;
this holds for {\it generic potentials}.

Our main result is the following long time decay of the solutions
to (\re{SE}): in the regular case,
\be\label{full}
 \Vert\cP_c\psi(t)\Vert_
 {\cH^0_{-\si}}=\cO (|t|^{-3/2}),\quad t\to\pm\infty
\ee
for initial data $\psi_0=\psi(0)\in\cH^0_\si$ with $\sigma>5/2$
where  $\cP_c$
is a Riesz projection onto the continuous spectrum of  $H$.
The decay is desirable for the study of asymptotic stability and scattering
for the solutions to nonlinear  equations.

Let us comment on previous results in this direction.
Asymptotic completeness for the magnetic
Schr\"odinger equation follows by methods of  the Birman-Kato theory \ci{RS}.
Spectral representation for this case has been obtained by Iwatsuka \ci{Iw82}
developing the Kuroda approach \ci{Kur1,Kur2}.
The Strichartz estimates for magnetic Schr\"odinger
equation with small potentials were obtained in
\ci{AF2008, GST}
and with large potentials in \ci{EGS}.
The decay in weighted  norms
has been obtained first by Jensen and Kato for the Schr\"odinger equation
with scalar potential \ci{jeka}.
However, for the magnetic Schr\"odinger equation
the decay in weighted  norms was not obtain up to now.

Finally, let us comment on our approach. We extend  methods of Agmon \ci{A},
and Jensen and Kato \ci{jeka},
to the magnetic Schr\"odinger equation.
Our main novelties - Theorems \re{AJK} and \re{HED} on
high energy decay for the magnetic resolvent, and
Lemmas \re{Agm2} and \re{Agm3} which are extensions of known Agmon's
Lemma A.2 and A.3 from \ci{A} (see also Lemma 4 from \ci[p. 442]{RS}).
Main problem in this extension - presence of the first order derivatives in the
perturbation. These derivatives cannot be handle with the perturbation theory
like \ci{A,jeka} since the correponding terms do not decay in suitable norms.
To avoid the perturbation approach, we apply spectral resolution for
magnetic Schr\"odinger operator in our extension
of  Lemma A.3 from \ci{A}. 

Our techniques rely on the D'Ancona-Fanelly
magnetic version of the Hardy inequality \ci{DF},  
spectral resolution established by Iwatsuka \ci{Iw82}, 
result of  Ionescu and Schlag  \ci{IS06}
on absence of singular spectrum, 
and result of  Koch and Tataru \ci{KT}
on absence of embedded eigenvalues in continuous spectrum.
We also apply limiting absorption principle
for the magnetic Schr\"odinger equation. We deduce the principle
by a suitable generalization of methods
of Agmon \ci{A}.

\setcounter{equation}{0}
\section{Free  equation}\la{fSE}
Denote by $R_0(\om)=(H_0-\om)^{-1}$ the resolvent of
the free Schr\"odinger operator  $H_0=-\Delta$.
The   resolvent is an
integral operator with the integral kernel
\be\la{ef}
R_0(\om,x-y)=\exp(i\om^{1/2}|x-y|)/4\pi|x-y|,
\quad\om\in\C\setminus[0,\infty),\quad\rIm\om^{1/2}>0.
\ee
Denote by $\cL (B_1,B_2)$ the Banach space of bounded linear operators
from a Banach space $B_1$ to a Banach space $B_2$.

Explicit formula (\re{ef}) implies the properties of $R_0(\om)$
which are obtained in \ci{A,jeka} (see also \cite[Appendix A]{3Dkg}):
\medskip\\
i) $R_0(\om)$ is  analytic function of $\om\in\C\setminus [0,\infty)$
with the values in $\cL(\cH^{m}_0,\cH^{m+2}_0)$ for any $m\in\R$;
\\
ii) The limiting absorption principle holds:
\be\la{lap}
R_0(\lam\pm i\ve)\to R_0(\lam\pm i0),\quad \ve\to 0+,\quad\lam>0
\ee
where the convergence holds
in $\cL(\cH^{m}_\si,\cH^{m+2}_{-\si})$ with  $\si>1/2$;
\\
iii) The asymptotics  hold for $\om\in\C\setminus [0,\infty)$,
\beqn\la{exp0}
\Vert R_0(\om)-R_0(0)\Vert_{\cL(\cH^{m}_\si,\cH^{m+2}_{-\si'})}
&\to& 0,\quad\om\to 0,\quad\si,\si'>1/2,\quad\si+\si'>2\\
\la{dif0}
\Vert R_0^{(k)}(\om)\Vert_{\cL(\cH^{m}_\si,\cH^{m+2}_{-\si})}
&=&\cO(\om^{1/2-k}),~~\om\to 0,~~\si>1/2+k,~~k=1,~2,...
\eeqn
iv) For $m\in\R$, $k=1,2,...$ and $\si>k+1/2$  the asymptotics hold
\be\la{A0k}
\Vert R_0^{(k)}(\om)\Vert_{\cL(\cH^m_\si,\cH^{m+l}_{-\si})}
=\cO(|\om|^{-\fr {1-l+k}2}),\quad \om\to\infty,~~
\om\in\C\setminus[0,\infty)
\ee
where $l=-1,0,1,2$ for $k=0$, and  $l=-1,0,1$ for $k=1,2,...$.
\medskip\\
Denote by $U _0(t)$ the dynamical group of the free  Schr\"odinger equation.
For $t\in\R$ and $\psi_0\in\cH^0_\si$ with $\si>1$, the group
$U _0(t)$ admits the integral representation
\be\la{int}
  U _0(t)\psi_0=\frac 1{2\pi i}\int\limits_0^{\infty} e^{-i\om t}
  \big[R _0(\om+i0)-R _0(\om-i0)\big]\Psi_0~ d\om
\ee
where the integral converges in the sense of distributions of $t\in\R$
with the values in $\cH^0_{-\si}$.

Representation (\re{int}),  properties i)-iii), and
bounds (\re{A0k}) imply the weighted  decay  (\ref{full})
for the solutions to the free  Schr\"odinger equation (see \cite{jeka}).
\setcounter{equation}{0}
\section{Perturbed  equation}\la{pSE}
\subsection{Limiting absorption principle}
Here we extend (\re{lap}) to perturbed resolvent
$R(\om)=(H-\om)^{-1}$, where
\be\la{HW}
H=H_0+W, ~~~~~~~
W\psi=(i\na\cdot{\bf A}+{\bf A}^2+V)\psi+i{\bf A}\cdot\na\psi.
\ee
\begin{theorem}\la{LAP}
Let condition (\re{V})  hold with  $\beta>2$.
Then
for $\lam>0$, the convergence  holds
\be\la{lapp}
R(\lam\pm i\ve)\to R(\lam\pm i0),\quad \ve\to 0+
\ee
in $\cL(L^2_\si, L^2_{-\si})$  with $\si>1/2$.
\end{theorem}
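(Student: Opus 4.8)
The plan is to reduce the limiting absorption principle for $R(\om)$ to the one for $R_0(\om)$ recorded in (\re{lap}), by the standard resolvent factorization together with an analytic Fredholm argument in weighted spaces. Writing $H=H_0+W$ as in (\re{HW}) and factoring $H-\om=(I+WR_0(\om))(H_0-\om)$, one has for $\rIm\om\neq0$ the identity
\be
R(\om)=R_0(\om)\big(I+WR_0(\om)\big)^{-1}.
\ee
Set $Q(\om):=WR_0(\om)$. It suffices to prove three facts on $L^2_\si$ with $1/2<\si<\beta/2$: (a) $Q(\om)$ is compact on $L^2_\si$; (b) $Q(\lam\pm i\ve)\to Q(\lam\pm i0)$ in the norm of $\cL(L^2_\si,L^2_\si)$ as $\ve\to0+$; (c) $I+Q(\lam\pm i0)$ is invertible on $L^2_\si$. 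Granting these, $(I+Q(\lam\pm i\ve))^{-1}\to(I+Q(\lam\pm i0))^{-1}$ in norm, while $R_0(\lam\pm i\ve)\to R_0(\lam\pm i0)$ in $\cL(L^2_\si,L^2_{-\si})$ by (\re{lap}); composing the two limits gives (\re{lapp}). The restriction $\si<\beta/2$ is harmless: the inclusions $L^2_\si\subset L^2_{\si'}$ and $L^2_{-\si'}\subset L^2_{-\si}$ for $\si>\si'$ propagate the conclusion to all $\si>1/2$.

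Facts (a) and (b) I would obtain by tracking mapping properties. By property i) and (\re{lap}), $R_0(\om)\colon L^2_\si\to\cH^2_{-\si}$. The delicate part of $W$ is its first order (magnetic) term of type ${\bf A}\cdot\na$, which is exactly the feature absent from the scalar theory of \ci{jeka}. Differentiation costs one order of smoothness, $\na\colon\cH^2_{-\si}\to\cH^1_{-\si}$, while multiplication by the components of ${\bf A}$, which together with $\na{\bf A}$ decay like $\langle x\rangle^{-\beta}$ by (\re{V}), improves the weight by $\beta$ and preserves $\cH^1$-regularity since ${\bf A}\in C^2$; the zeroth order terms ${\bf A}^2+V$ and the divergence part of $i\na\cdot{\bf A}$ are even better behaved. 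Hence $W\colon\cH^2_{-\si}\to\cH^1_{-\si+\beta}$ and $Q(\om)\colon L^2_\si\to\cH^1_{-\si+\beta}$. Because $\beta>2\si$, the embedding $\cH^1_{-\si+\beta}\hookrightarrow L^2_\si$ gains both smoothness and a strictly positive amount of decay, hence is compact (see \ci{A}); this gives (a). For (b) the same bounds show that $W\colon\cH^2_{-\si}\to L^2_\si$ is bounded, so the norm continuity of $Q$ down to the real axis follows by composition from that of $R_0$ in (\re{lap}).

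The main obstacle is (c). Since $Q(\lam\pm i0)$ is compact, $I+Q(\lam\pm i0)$ is Fredholm of index zero, so it suffices to prove injectivity. Suppose $f\in L^2_\si$ satisfies $f+WR_0(\lam\pm i0)f=0$ and set $u:=R_0(\lam\pm i0)f\in\cH^2_{-\si}$. Then $(H_0-\lam)u=f=-Wu$, so $(H-\lam)u=0$, and $u$ is the outgoing (resp. incoming) solution singled out by the free limiting absorption principle, with $u\in L^2_{-\si}$. The claim is that any such radiating solution vanishes. I would prove this by a Rellich-type uniqueness theorem adapted to the magnetic operator: an Agmon bootstrap — the extensions of Agmon's Lemmas A.2 and A.3, namely Lemmas \re{Agm2} and \re{Agm3}, combined with the D'Ancona--Fanelli magnetic Hardy inequality \ci{DF} — upgrades the decay of $u$ and shows $u\in L^2$; then $u$ would be an eigenfunction of $H$ with positive eigenvalue $\lam$ embedded in the continuous spectrum, which is excluded by the absence of embedded eigenvalues \ci{KT} and of singular continuous spectrum \ci{IS06}. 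Hence $u\equiv0$, so $f=-Wu=0$, and $I+Q(\lam\pm i0)$ is injective, hence invertible. It is precisely the first order term ${\bf A}\cdot\na$ that blocks a naive perturbative treatment and makes the magnetic Hardy inequality and the spectral input \ci{KT,IS06} indispensable at this step; with (a)--(c) in hand, the composition described above yields (\re{lapp}).
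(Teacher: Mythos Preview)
Your overall architecture matches the paper's: Born splitting, compactness (your (a) and (b) correspond to Lemma~\ref{lA}), Fredholm reduction to injectivity, and elimination of a nontrivial kernel via the absence of embedded eigenvalues from \cite{KT}. The choice of the dual factorization $R=R_0(I+WR_0)^{-1}$ rather than the paper's $R=(I+R_0W)^{-1}R_0$ is immaterial; the paper itself switches to your version in Corollary~\ref{cLAP}.

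The genuine gap is in (c), in the passage from $u\in L^2_{-\si}$ to $u\in L^2$. You invoke Lemmas~\ref{Agm2} and~\ref{Agm3} and the magnetic Hardy inequality of \cite{DF}, but these are the wrong tools: they are a~priori estimates built for the high-energy decay of Theorem~\ref{AJK}, and they do not improve the weight of a radiating null solution. Concretely, (\ref{A2}) controls $\Vert(-i\nabla-A)u\Vert_{L^2_{-\si}}$ by $\Vert(H_A-\lam)u\Vert_{L^2_\si}=\Vert Vu\Vert_{L^2_\si}$, and inserting this into (\ref{A3}) with $s=-\si$ only returns $u\in L^2_{-\si}$; there is no gain in the weight and no bootstrap. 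The magnetic Hardy inequality appears in the paper only in Lemma~\ref{eqv}, and \cite{IS06} only inside the proof of Lemma~\ref{Agm2}; neither is used for Theorem~\ref{LAP}. What actually carries step (c) in the paper (Lemma~\ref{pAK}) is the Agmon trace-vanishing argument: with $f=-Wu\in L^2_1$, the Sokhotsky--Plemelj formula combined with Parseval identifies the imaginary part of $\lim_{\ve\to0+}(R_0(\lam+i\ve)f,f)$ with a positive multiple of $\int_{S_{\sqrt\lam}}|\hat f|^2\,dS$, while this same limit equals $-(u,Wu)$, which is real because $W$ is symmetric. Hence $\hat f|_{S_{\sqrt\lam}}=0$, and then the classical one-dimensional Hardy inequality yields $\Vert u\Vert_{L^2}\le C\Vert f\Vert_{L^2_1}$. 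This trace-vanishing step, together with the use of the symmetry of $W$, is the idea missing from your sketch.
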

Theorem \re{LAP} follows from the result of \cite{EGS} where
slightly weaker conditions on potentials are imposed. For
the convenience of readers we give the independent proof in our
case. 
For the proof we will use the Born splitting \be\la{for}
R(\om)=[1+R_0(\om)W]^{-1}R_0(\om),\quad\om\in\C\setminus
[0,\infty) \ee where the operator function $[1+R_0(\om)W]^{-1}$ is
meromorphic in $\C\setminus [0,\infty)$ by the Gohberg-Bleher
theorem. The key role in the proofs of (\re{lapp}) plays the result
on the absence of the embedded
eigenvalues  in the continuous spectrum
which is known due from the paper of Koch and Tataru \ci{KT}.
We start with the following lemma
\begin{lemma}\la{lA}
i) Let  condition (\re{V}) hold with some $\beta >1$.
Then for $\lam > 0$, the operators
$R_0(\lam \pm i0)W:L^2_{-\si}\to L^2_{-\si}$
and $WR_0(\lam \pm i0):L^2_{\si}\to L^2_{\si}$
are compact for $\si\in(1/2,\beta-1/2)$.
\\
ii)
Let  condition (\re{V}) hold with some $\beta >2$.
Then the operators
$R_0(0)W:L^2_{-\si}\to L^2_{-\si}$
and $WR_0(0):L^2_{\si}\to L^2_{\si}$
are compact for $\si\in(1/2,\beta-1/2)$.
\end{lemma}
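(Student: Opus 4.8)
The plan is to split $W=W_0+W_1$ into its zeroth-order multiplication part $W_0$, i.e. multiplication by $w_0:=i\na\cdot{\bf A}+{\bf A}^2+V$, which by (\re{V}) is continuous and $\cO(\langle x\rangle^{-\beta})$, and its first-order part $W_1:=i{\bf A}\cdot\na$, whose coefficients are $C^2$ and $\cO(\langle x\rangle^{-\beta})$; then each of the four operators is written as a composition of three bounded maps, the last of which is a compact embedding. Two elementary facts drive everything: (a) multiplication by a sufficiently smooth $\cO(\langle x\rangle^{-\beta})$ function maps $\cH^m_\tau\to\cH^m_{\tau+\beta}$ boundedly for all $m,\tau$, so $W_0:\cH^m_\tau\to\cH^m_{\tau+\beta}$ and, since $\na:\cH^m_\tau\to\cH^{m-1}_\tau$, also $W_1:\cH^m_\tau\to\cH^{m-1}_{\tau+\beta}$; and (b) the weighted embedding $\cH^{s_1}_{\tau_1}\hookrightarrow\cH^{s_2}_{\tau_2}$ is compact whenever $s_1>s_2$ and $\tau_1>\tau_2$. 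The resolvent enters only through the limiting absorption principle (\re{lap}), namely $R_0(\lam\pm i0):\cH^m_\tau\to\cH^{m+2}_{-\tau}$ for $\tau>1/2$, which supplies the crucial gain of two derivatives.

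For part i), note that $\si\in(1/2,\beta-1/2)$ gives both $\si>1/2$ and $\beta-\si>1/2$, so one can fix $\tau$ with $1/2<\tau<\min(\si,\beta-\si)$. To show $R_0(\lam\pm i0)W$ is compact on $L^2_{-\si}$ I would factor the zeroth-order part as $W_0:\cH^0_{-\si}\to\cH^0_{\beta-\si}\hookrightarrow\cH^0_\tau$, then $R_0(\lam\pm i0):\cH^0_\tau\to\cH^2_{-\tau}$, then the compact embedding $\cH^2_{-\tau}\hookrightarrow\cH^0_{-\si}$ (valid since $\tau<\si$); the first-order part runs identically but one derivative lower, $W_1:\cH^0_{-\si}\to\cH^{-1}_{\beta-\si}\hookrightarrow\cH^{-1}_\tau$, followed by $R_0(\lam\pm i0):\cH^{-1}_\tau\to\cH^1_{-\tau}$ and $\cH^1_{-\tau}\hookrightarrow\cH^0_{-\si}$. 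The transposed operator $WR_0(\lam\pm i0)$ on $L^2_\si$ is handled by reversing the chain: apply $R_0(\lam\pm i0):\cH^0_\si\to\cH^2_{-\tau}$ (resp. $\cH^1_{-\tau}$) first, then gain the weight $\beta$ with $W_0$ (resp. $W_1$) to reach $\cH^2_{-\tau+\beta}$ (resp. $\cH^1_{-\tau+\beta}$), which embeds compactly into $\cH^0_\si$ because $\tau<\beta-\si$. This yields i).

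For part ii), I would repeat all four factorizations with $R_0(\lam\pm i0)$ replaced by $R_0(0)$. The only change is that the threshold bound (\re{exp0}) provides $R_0(0):\cH^m_{\si_1}\to\cH^{m+2}_{-\si_2}$ only under the additional constraint $\si_1+\si_2>2$ on top of $\si_1,\si_2>1/2$. Choosing $\si_1=\beta-\si$, the full decay produced by $W_0$ or $W_1$, and $\si_2$ just below $\si$, one has $\si_1+\si_2\to\beta$, so the constraint can be met exactly when $\beta>2$, which is the hypothesis of ii); the compact embeddings at the end of each chain are unchanged.

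The one genuinely new point relative to \ci{A,jeka}, and the step I expect to be the main obstacle, is the first-order term $W_1$: because it loses a derivative, to keep the scheme closed I must run the resolvent at the negative regularity level $m=-1$, where (\re{lap}) still gains two derivatives and the composition lands in $\cH^1_{-\tau}$, one derivative above $\cH^0_{-\si}$, so the final embedding remains compact. This forces the use of boundedness of multiplication by ${\bf A}$ on $\cH^{-1}$, which is precisely where the $C^2$ regularity of ${\bf A}$ in (\re{V}) is needed, via duality with $\cH^1$; mere boundedness and decay of ${\bf A}$ would not suffice.
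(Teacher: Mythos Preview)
Your strategy coincides with the paper's: factor $R_0(\lam\pm i0)W$ as $W:L^2_{-\si}\to\cH^{-1}_{\si'}$, then $R_0(\lam\pm i0):\cH^{-1}_{\si'}\to\cH^1_{-\si'}$, then the compact embedding $\cH^1_{-\si'}\hookrightarrow L^2_{-\si}$, with $\si'\in(1/2,\min(\si,\beta-\si))$ (your $\tau$). The paper does \emph{not} split $W=W_0+W_1$; it treats $W$ in one stroke as a first-order operator $\cH^0_{-\si}\to\cH^{-1}_{\si'}$, which already covers both of your chains. For part ii) the paper chooses two weights $\si',\si''$ with $\si'+\si''>2$ to meet the extra constraint in (\re{exp0}), exactly as you outline. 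For $WR_0(\lam\pm i0)$ the paper argues by duality (the adjoint of $R_0(\lam\mp i0)W:L^2_{-\si}\to L^2_{-\si}$ is $WR_0(\lam\pm i0):L^2_{\si}\to L^2_{\si}$, and compactness passes to adjoints), rather than running your reversed chain.

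Your direct treatment of $WR_0$ contains a slip: you assert $W_0:\cH^2_{-\tau}\to\cH^2_{-\tau+\beta}$, but $w_0=i\na\cdot A+A^2+V$ includes $V$, which is only assumed continuous, so multiplication by $w_0$ does not preserve $\cH^2$. This breaks the $W_0$ half of your chain for $WR_0$. It is repairable---place the compact embedding before $W_0$, i.e.\ $\cH^2_{-\tau}\hookrightarrow\cH^0_{-\tau'}$ compactly for some $\tau'\in(\tau,\beta-\si)$, then $W_0:\cH^0_{-\tau'}\to\cH^0_{\beta-\tau'}\hookrightarrow\cH^0_\si$---but the paper's duality argument avoids the issue altogether. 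Also, your closing remark overstates what is needed: boundedness of multiplication by $A$ on $\cH^{-1}$ requires only $A\in C^1$ (via duality with $\cH^1$), not $C^2$; the $C^2$ hypothesis on $A$ is used elsewhere in the paper (the gauge transformation in Lemma~\re{Agm2}), not in this lemma.
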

\begin{proof}
i)
Choose $\si'\in(1/2,\min(\si,\beta-\si))$. The operator
$W: L^2_{-\si}\to \cH^{-1}_{\si'}$ is continuous by (\re{V}) since
$\si+\si'<\beta$.
Further,
$R_0(\lam \pm i0):  \cH^{-1}_{\si'}\to \cH^1_{-\si'}$
is continuous by (\re{lap})
and the embedding $\cH^1_{-\si'}\to L^2_{-\si}$
is compact by the Sobolev embedding theorem.
Hence, the operators
$R_0(\lam \pm i0)W: L^2_{-\si}\to L^2_{-\si}$
are compact. The compactness of
$WR_0(\lam \pm i0): L^2_{\si}\to L^2_{\si}$ follows by duality.
\\
ii)
Choose sufficienlty small $\ve>0$ such that
\be\la{sie}
  \si':=\beta-\si-\ve>1/2,~~~~~~~\si'':=\si-\ve>1/2,~~~~~~~
  \si'+\si''=\beta-2\ve>2.
\ee
The operator $W: L^2_{-\si}\to \cH^{-1}_{\si'}$ is continuous by (\re{V}) since
$\si'+\si=\beta-\ve<\beta$, and the operator
$R_0(0): \cH^{-1}_{\si'}\to \cH^1_{-\si''}$
is continuous by (\re{exp0}).
The embedding $\cH^1_{-\si''}\to L^2_{-\si}$
is compact by the Sobolev embedding theorem.
Hence, the operator
$R_0(0)W: L^2_{-\si}\to L^2_{-\si}$
is compact. The compactness of
$WR_0(0):L^2_{\si}\to L^2_{\si}$ follows by duality.
\end{proof}
Theorem \re{LAP} will follow from  convergence (\re{lap}) and the
Born splitting (\re{for}) if
$$
  [1+R_0(\lam\pm i\ve)W]^{-1}\to[1+R_0(\lam\pm i0)W]^{-1},
  \quad\ve\to +0,\quad\lam>0
$$
in $\cL(L^2_{-\si}, L^2_{-\si})$  with $\si>1/2$.
The convergence holds if and only if the both limiting operators
$1+R_0(\lam \pm i0)W:L^2_{-\si}\to L^2_{-\si}$
are invertible for $\lam>0$.
The operators are invertible according to the Fredholm theorem
by Lemma \re{lA} i) and the following lemma.
\begin{lemma}\la{pAK}
Let condition (\re{V}) holds with some $\beta>2$. Then
for $\lam>0$ the equations
\be\la{AK}
  [1+R_0(\lam \pm i0)W]\psi=0
\ee
admit only zero solution in $L^2_{-1/2-0}$.
\end{lemma}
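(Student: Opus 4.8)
The plan is to prove that any nonzero $\psi\in L^2_{-1/2-0}$ satisfying $[1+R_0(\lam\pm i0)W]\psi=0$ would be a genuine $L^2$-eigenfunction of $H$ at the positive energy $\lam$, and then to exclude such eigenfunctions by the Koch--Tataru theorem \ci{KT} on the absence of eigenvalues embedded in the continuous spectrum. I treat the sign $+i0$; the case $-i0$ follows by complex conjugation. Rewriting the equation as $\psi=-R_0(\lam+i0)W\psi$, I first bootstrap the regularity and decay of the right-hand side. As a distribution $W\psi\in\cH^{-1}_{\si'}$ for $\si'<\beta-1/2$ by (\re{V}), and since $R_0(\lam+i0)$ gains two derivatives by (\re{lap}), $\psi$ is at least locally in $\cH^1$, so $\na\psi$ is well defined; choosing the minimal input weight in (\re{lap}) gives $\psi,\na\psi\in L^2_{-1/2-0}$, and re-substitution then yields $\psi\in\cH^2_{\rm loc}$. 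The decay (\re{V}) with $\beta>2$ now shows that every term of $W\psi=(i\na\cdot{\bf A}+{\bf A}^2+V)\psi+i{\bf A}\cdot\na\psi$ acquires a factor $\langle x\rangle^{-\beta}$, so that $f:=-W\psi\in L^2_\sigma$ for some $\sigma>1/2$ (in fact up to $\sigma=\beta-1/2-0>3/2$). The first-order term $i{\bf A}\cdot\na\psi$ is the delicate one, but the smoothing of $R_0$ keeps it under control.

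Next, since $\psi=R_0(\lam+i0)f$ with $f\in L^2_\sigma$, $\sigma>1/2$, the function $\psi$ is the outgoing distributional solution of $(H_0-\lam)\psi=f$, that is of $(H-\lam)\psi=0$, and it satisfies the Sommerfeld radiation condition $\pa_r\psi-i\sqrt\lam\,\psi=o(r^{-1})$ together with the asymptotics $\psi(x)\sim a(\hat x)\,e^{i\sqrt\lam\,r}/r$ as $r=|x|\to\infty$, with amplitude $a\in L^2(S^2)$ (cf. \ci{A}). I would then run the Rellich--Agmon flux argument, our extension of Agmon's Lemma~A.3. Green's formula for the formally self-adjoint magnetic operator $H$ on the ball $B_R$, combined with $(H-\lam)\psi=0$, yields that the boundary flux $\int_{\pa B_R}\nu\cdot j\,dS$ vanishes for every $R$, where $j=\mathrm{Im}(\bar\psi\,\na\psi)-{\bf A}|\psi|^2$ is the magnetic current.

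Passing to the limit $R\to\infty$, the magnetic part $\int_{\pa B_R}\nu\cdot{\bf A}|\psi|^2\,dS$ tends to zero because ${\bf A}(x)=\cO(\langle x\rangle^{-\beta})$ with $\beta>2$ while $|\psi|^2=\cO(r^{-2})$, whereas the free part converges to $\sqrt\lam\int_{S^2}|a(\hat x)|^2\,d\hat x$. Hence $\sqrt\lam\int_{S^2}|a|^2=0$, so the outgoing amplitude vanishes, $a\equiv0$. By Rellich's uniqueness theorem the vanishing of the amplitude forces $\psi$ to decay faster than the free outgoing wave; more precisely the resolvent smoothing then gives $\psi\in L^2$, so $\psi$ is a genuine eigenfunction with $H\psi=\lam\psi$. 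Finally, since $\lam>0$, the Koch--Tataru result \ci{KT} on the absence of embedded eigenvalues of the magnetic Schr\"odinger operator yields $\psi=0$.

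The main obstacle is precisely this Rellich--Agmon step: upgrading the weighted bound $\psi\in L^2_{-1/2-0}$ to genuine $L^2$-decay in the presence of the first-order magnetic term $i{\bf A}\cdot\na\psi$. This term does not decay in the norms for which a Neumann-series (perturbation) treatment as in \ci{A,jeka} would apply, so the argument must go through the spectral/flux identity for the self-adjoint operator $H$ rather than through resolvent expansions; this is where the spectral resolution for the magnetic operator is used in place of perturbation theory.
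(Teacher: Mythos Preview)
Your overall strategy---show that any solution must be a genuine $L^2$-eigenfunction and then invoke Koch--Tataru \ci{KT}---is the paper's strategy too, but the paper reaches $\psi\in L^2$ by a purely Fourier-side argument rather than a physical-space flux computation. After bootstrapping to $f:=-W\psi\in L^2_1$ (so $\hat f\in\cH^1$), the paper computes $\rIm(\psi,f)=\lim_{\ve\to0^+}\rIm(R_0(\lam+i\ve)f,f)$ via the Sokhotsky--Plemelj formula and obtains $\int_{S_{\sqrt\lam}}|\hat f|^2\,dS=0$ directly from the reality of $(\psi,W\psi)$; then a Hardy-inequality argument in local coordinates near the sphere converts $\hat f|_{S_{\sqrt\lam}}=0$ together with $\hat f\in\cH^1$ into $\hat\psi=\hat f/(\xi^2-\lam)\in L^2$. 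Your vanishing far-field amplitude $a$ is exactly the statement $\hat f|_{S_{\sqrt\lam}}=0$, and the identity $\rIm(\psi,f)\propto\Vert a\Vert^2$ that the paper establishes is the Fourier-space avatar of your flux identity, so in spirit the two routes coincide.

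Two places in your sketch need tightening. First, with only $f\in L^2_1$ you cannot claim the pointwise asymptotics $\psi\sim a(\hat x)e^{i\sqrt\lam r}/r$ and $|\psi|^2=\cO(r^{-2})$ needed to pass to the limit $R\to\infty$ in $\int_{\pa B_R}\nu\cdot j\,dS$; the paper sidesteps this by taking $\ve\to0^+$ in $(R_0(\lam+i\ve)f,f)$ and using only the trace theorem for $\hat f\in\cH^1$. Second, your step ``$a\equiv0$ hence $\psi\in L^2$'' is not a consequence of Rellich's uniqueness theorem as usually stated (which yields vanishing outside a ball, not $L^2$-membership); it is precisely the content of the paper's step~(iv), proved via the Hardy inequality. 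Finally, your closing paragraph conflates this lemma with Lemmas~\re{Agm2}--\re{Agm3}: the spectral resolution of the magnetic operator is invoked there, for the high-energy resolvent bounds, and plays no role here. In the present lemma the only feature of the first-order magnetic term that matters is that $W$ is symmetric, so that $(\psi,W\psi)$ is real.
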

\begin{proof}
We adopt general strategy from \ci{A}.
\medskip\\
 {\it Step i)} We consider the case $\lam +i0$ for concreteness.
Equality (\re{AK}) implies that
\be\la{eqi}
  (H-\lam)\psi=(H_0-\lam)(1+R_0(\lam+ i0)W)\psi=0.
\ee
We will show that if  $\psi\in L^2_{-1/2-0}$ is the solution to (\re{AK})
then $\psi\in L^2$, i.e
$\psi$ is the eigenfunction of
$H$ corresponding to the positive eigenvalue $\lam>0$.
However, the embedded eigenvalue is forbidden,
and  then $\psi=0$.
\medskip\\
 {\it Step ii)}
From  (\re{AK}) it follows that
\be\la{eqii}
  \psi=R_0(\lam+ i0)f,~~{\rm where}~~f=-W\psi.
\ee
Moreover, (\re{AK})  also implies  that $\psi\in \cH^1_{-1/2-0}$.
Hence,   $f\in L^2_{1}$ by (\re{V}) with $\beta>3/2$.
In the Fourier transform, equation (\re{eqii}) becomes
$$
 \hat\psi(\xi)=\fr{\hat f(\xi)}{\xi^2-\lam-i0},~~~~~\xi\in\R^3
$$
where $\hat f$ is a function from the Sobolev space $\cH^{1}$.
\medskip\\
{\it Step iii)}
Next, we prove  that
\be\la{fS0}
  \hat f(\xi)\Big|_{S_{\sqrt\lam}}=0
\ee
where
$S_{\sqrt\lam}:=\{\xi\in\R^3:|\xi|=\sqrt{\lam}\}$.
Note that the trace on the sphere $S_{\sqrt\lam}$ exists, and
$\hat f(\xi)\Big|_{S_{\sqrt\lam}}\in\cH^{1/2}(S_{\sqrt\lam})$.
Moreover, in the polar coordinates
$r=|\xi|\in [0,\infty)$, $\vp=\xi/|\xi|\in S_1$, the map
$$
  M: [0,\infty)\to L^2(S_1),\quad M(r) = \hat f(r\vp),~~~\vp\in S_1
$$
is H\"older continuous with the H\"older exponent $\al\in(0,1/2)$.
This follows from the Sobolev theorem on the traces (\ci[Ch. 1]{LM}).
Define
$$
  \hat\psi_\ve(\xi)=\fr{\hat f(\xi)}{\xi^2-\lam-i\ve},~~~~\ve>0.
$$
Then both $\hat f, \hat\psi_\ve\in L^2(\R^3)$, hence
the Parseval identity implies that
\beqn\la{fVpie}
  \!\!\!\!\!\!\!\!&&(\psi_\ve,f)=(\hat\psi_\ve,\hat f)\nonumber\\
  \!\!\!\!\!\!\!\!&&=\int \fr{|\hat f(\xi)|^2}{\xi^2-\lam-i\ve} d\xi
  \longrightarrow \fr{i\pi}{2\sqrt\lam}
  \int\limits_{S_{\sqrt\lam}}|\hat f(\xi)|^2dS(\xi)+
  \lim_{\de\to 0}\int\limits_{||\xi|-\sqrt\lam|>\de}
  \fr{|\hat f(\xi)|^2}{\xi^2-\lam}d\xi,~~~\ve\to 0+
\eeqn
by the  Sokhotsky-Plemelj formula since the map
$$
M_1: [0,\infty)\to L^1(S_1),\quad M_1(r) = |\hat f(r\vp)|^2,~~~\vp\in S_1
$$
is the H\"older continuous with the same H\"older exponent $\al\in(0,1/2)$.
On the other hand,
\be\la{oh}
  (\psi_\ve,f)=(R_0(\lam+i\ve)f,f)\longrightarrow (\psi,f)
  =-(\psi,W\psi),~~~~~\ve\to 0+
\ee
since $R_0(\lam+i\ve)f\to\psi$ in $\cH^2_{-1/2-0}$ by (\re{lap}),
while $f\in L^2_1$.
The operator $W$ is selfajoint,
hence the scalar product $(\psi,W\psi)$ is real.
Comparing (\re{fVpie}) and (\re{oh}), we conclude  that
$$
\int_{S_{\sqrt\lam}}|\hat f(\xi)|^2dS(\xi)=0
$$
i.e. (\ref{fS0}) is proved.
Relation (\re{fS0}) and the H\"older continuity imply that
\be\la{hp}
 \hat\psi(\xi)=\fr{\hat f(\xi)}{\xi^2-\lam}\in L^1_{\rm loc}(\R^3).
\ee
{\it Step iv)}
Finally we prove that
\be\la{fHs}
\Vert\hat\psi\Vert_{L^2}\le C\Vert \hat f\Vert_{\cH^1}.
\ee
For the proof we take any $\ve\in (0,\sqrt\lam/2)$, and a cutoff function
$$
  \zeta(\xi)\in C_0^\infty(\R^3),\quad\zeta(\xi)=\left\{\ba{ll}
  1,& ||\xi|-\sqrt\lam|<\ve\\
  0,& ||\xi|-\sqrt\lam|>2\ve
  \ea\right.
$$
By (\re{hp}), we have
$$
  \Vert(1-\zeta(\xi))\hat\psi(\xi)\Vert_{\cH^1}
  =\Vert\fr{1-\zeta(\xi)}{\xi^2-\lam}\hat f(\xi)\Vert_{\cH^1}
  \le C\Vert \hat f\Vert_{\cH^1}\,.
$$
Hence, it remains to estimate the norm of the function
$\zeta(\xi)\hat\psi(\xi)$. Choose a finite partition of unity
$\sum\zeta_j(\xi)=1$, $\xi\in\supp\zeta$,
with $\zeta_j\in C_0^\infty(\R^3\setminus 0)$.
We may assume that in the $\supp\zeta_j$, for every fixed $j$, there exist
the  corresponding local coordinates
$\eta_1,\eta_2,\eta_3$ with $\eta_1=\xi^2-\lam$.
Then, the problem reduces to the estimate
$$
\Vert\vp(\eta)\Vert_{L^2}\le C\Vert \eta_1\vp(\eta)\Vert_{\cH^1}
$$
taking into account that $\vp(\eta)\in L^1(\R^3)$ by (\re{hp}).
It suffices to prove the bound
\be\la{fHser}
\Vert\phi(x)\Vert_{L^2}\le C\Vert \pa_1\phi(x)\Vert_{L^2_1}
\ee
for the function $\phi(x):=F^{-1}\vp$, knowing that
$\phi(x)\to 0$ as $|x|\to\infty$
by the Riemann-Lebesgue theorem.
Bound (\re{fHser}) follows by the Hardy inequality (see \ci{A})
$$
\int\phi^2(x_1,x')dx_1\le 4 \int |x_1|^{2}
|\pa_1\phi(x_1,x')|^2dx_1,\quad{\rm a.a.}~~x':=(x_2,x_3)\in\R^2
$$
by integration over $x'\in\R^2$.
Now (\re{fHs}) is proved.
Finally, (\re{fHs}) can be rewritten as
$$
\Vert\psi\Vert_{L^2}\le C\Vert f\Vert_{L^2_1}
$$
that proves Lemma \re{pAK}.
\end{proof}
Now the proof of  Theorem \re{LAP} is also completed.

\begin{cor}\la{cLAP}
Under the conditions of Theorem \re{LAP},  convergence (\re{lapp})
holds in  $\cL(\cH^0_{\si}, \cH^2_{-\si})$ with $\si>1/2$.
\end{cor}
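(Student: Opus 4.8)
The plan is to bootstrap the $L^2$-convergence of Theorem~\re{LAP} up to two derivatives by means of the resolvent identity
\be
R(\om)=R_0(\om)-R_0(\om)WR(\om),\qquad\om=\lam\pm i\ve,\quad\ve>0,
\ee
which follows from $H=H_0+W$. I would apply it for $\ve>0$ and then let $\ve\to 0+$; since a product $B_\ve A_\ve$ of operator-norm convergent families is again operator-norm convergent, it suffices to control each factor. The first summand $R_0(\lam\pm i\ve)$ already converges in $\cL(\cH^0_\si,\cH^2_{-\si})$ for every $\si>1/2$ by (\re{lap}) and lands in $\cH^2_{-\si}$; the whole difficulty is the second summand, i.e. the action of the perturbation $W$.

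First I would record the two mapping properties of $W$ that drive the argument. Writing $W$ in the divergence form $W\psi=(\bA^2+V)\psi+i\na\cdot(\bA\psi)$, obtained by regrouping the first-order terms of (\re{HW}), the bound (\re{V}) gives, for every weight $\tau$,
\be
W:\cH^0_\tau\to\cH^{-1}_{\tau+\beta},\qquad W:\cH^1_\tau\to\cH^0_{\tau+\beta},
\ee
the gain of $\beta$ in the weight coming from the decay of $\bA^2+V$ and of $\bA$, and the loss of exactly one derivative coming from the first-order term $i\bA\cdot\na$. This loss of one derivative, precisely the new feature of the magnetic case, is the reason the estimate cannot be obtained in one step and forces a two-round bootstrap.

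In the first round I would fix $\si\in(1/2,\beta-1/2)$ and a weight $\tau\in(1/2,\min(\si,\beta-\si)]$ and compose: $R(\lam\pm i\ve)$ converges in $\cL(\cH^0_\si,\cH^0_{-\si})$ by Theorem~\re{LAP}; then $W:\cH^0_{-\si}\to\cH^{-1}_{\beta-\si}\hookrightarrow\cH^{-1}_\tau$; then $R_0(\lam\pm i\ve)$ converges in $\cL(\cH^{-1}_\tau,\cH^1_{-\tau})$ by (\re{lap}), hence a fortiori in $\cL(\cH^{-1}_\tau,\cH^1_{-\si})$ since $\tau\le\si$. Together with the first summand this yields convergence of $R(\lam\pm i\ve)$ in $\cL(\cH^0_\si,\cH^1_{-\si})$. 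In the second round I would feed this improved regularity back in: now $W:\cH^1_{-\si}\to\cH^0_{\beta-\si}\hookrightarrow\cH^0_\tau$, while $R_0(\lam\pm i\ve)$ converges in $\cL(\cH^0_\tau,\cH^2_{-\tau})\subset\cL(\cH^0_\tau,\cH^2_{-\si})$ by (\re{lap}), so that $R_0(\lam\pm i\ve)WR(\lam\pm i\ve)$ converges in $\cL(\cH^0_\si,\cH^2_{-\si})$. This proves (\re{lapp}) in $\cL(\cH^0_\si,\cH^2_{-\si})$ for $\si\in(1/2,\beta-1/2)$.

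Finally I would remove the upper restriction on $\si$: for arbitrary $\si>1/2$ pick any $\si'\in(1/2,\min(\si,\beta-1/2))$ and compose the convergence just obtained in $\cL(\cH^0_{\si'},\cH^2_{-\si'})$ with the bounded inclusions $\cH^0_\si\hookrightarrow\cH^0_{\si'}$ and $\cH^2_{-\si'}\hookrightarrow\cH^2_{-\si}$. The step I expect to require the most care is the weight bookkeeping in the two rounds: checking that the admissible window $\tau\in(1/2,\min(\si,\beta-\si)]$ is nonempty, which is exactly where $\beta>2$ enters, and verifying that the first-order term of $W$ really costs only one derivative, so that the continuity of $W$ at each arrow is legitimate.
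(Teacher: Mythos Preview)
Your proof is correct, but the paper takes a different and shorter route. Instead of the second resolvent identity $R=R_0-R_0WR$ and a two-round bootstrap through $\cH^1_{-\si}$, the paper uses the \emph{right-sided} Born splitting
\[
R(\om)=R_0(\om)\,[1+WR_0(\om)]^{-1},
\]
and observes that $1+WR_0(\lam\pm i0):L^2_\si\to L^2_\si$ is the adjoint of $1+R_0(\lam\mp i0)W:L^2_{-\si}\to L^2_{-\si}$, which was already shown to be invertible in the proof of Theorem~\re{LAP}; Fredholm then gives invertibility on the positive-weight side, and composing with the convergence of $R_0$ in $\cL(\cH^0_\si,\cH^2_{-\si})$ finishes in one stroke. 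The paper's argument is more economical because it recycles the Fredholm/duality machinery already set up, and it never has to track the intermediate $\cH^{-1}$ and $\cH^1$ levels. Your bootstrap, on the other hand, is more elementary---it uses only the mapping properties of $W$ and the limiting absorption principle for $R_0$, without re-invoking Fredholm or duality---and the technique of iterating the resolvent identity to climb in regularity is a robust one that transfers to settings where the adjoint structure is less clean. Your weight bookkeeping (the window $\tau\in(1/2,\min(\si,\beta-\si)]$, nonempty precisely because $\beta>2$) is correct and matches the choice made in Lemma~\re{lA}.
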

\begin{proof}
The operators $1+WR_0(\lam\pm i0): L^2_{\si}\to L^2_{\si}$
are adjoint to $1+R_0(\lam\mp i0)W: L^2_{-\si}\to L^2_{-\si}$.
The operators $1+R_0(\lam\mp i0)W$ are invertible by Lemma \re{pAK},
hence $1+WR_0(\lam\pm i0)$ also are invertible by the Fredholm theorem.
Therefore, the corollary follows by the alternative
Born splitting
\be\la{fort2}
R(\om)=R_0(\om)[1+WR_0(\om)]^{-1},\quad\om\in\C\setminus [0,\infty)
\ee
and convergence (\re{lap}).
\end{proof}
\subsection{Zero point $\om=0$}
Here we consider  $R(\om)$ near  $\om=0$.
We set
$$
  \cM=\{\psi\in L^2_{-1/2-0}: \psi+R_0(0)W\psi=0\}.
$$
The functions $\psi\in \cM\cap L^2$ are the zero eigenfunctions of
$H$ since $H\psi=H_0(1+R_0(0)W)\psi=0$
by splitting (\re{for}).
The functions $\psi\in \cM\setminus L^2$ are called  the {\bf zero
resonances} of  $H$.

Our key assumption is (cf. Condition (i) in \ci[Theorem 7.2]{M}):
\be\la{SC}
{\bf Spectral~ Condition:}
~~~~~~~~~~~~~~~~~~~\cM=0~~~
~~~~~~~~~~~~~~~~~~~~~~~~~~~~
\ee
In the other words, the point $\om=0$ is neither eigenvalue nor resonance for
the operator $H$. Condition (\re{SC})
holds for {\it generic $W$}.
\begin{lemma}\la{lp0}
Let condition (\re{V}) with a $\beta> 2$ and
Spectral Condition (\re{SC}) hold. Then the discrete spectral set
$\Si$ is finite, and for $\si,\si'>1/2$
with $\si+\si'>2$, the  asymptotics hold,
\be\la{lappexp}
  \Vert R(\om)-R(0)\Vert_{\cL(\cH^0_{\si},\cH^2_{-\si'})}
  \to 0,\quad\om\to 0,\quad\om\in \C\setminus[0,\infty)
\ee
where the operator $R(0):\cH^0_{\si}\to \cH^2_{-\si'}$ is continuous.
\end{lemma}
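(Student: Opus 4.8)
The plan is to run the Jensen--Kato analysis at the edge $\om=0$ through the Born splitting (\re{for}), written as $R(\om)=B(\om)^{-1}R_0(\om)$ with the Birman--Schwinger type operator $B(\om):=1+R_0(\om)W$. The decisive input is that $B(0)=1+R_0(0)W$ is boundedly invertible on $L^2_{-\si}$ for $\si\in(1/2,\beta-1/2)$. First I would note that $R_0(0)W$ is compact on $L^2_{-\si}$ by Lemma \re{lA} ii), so by the Fredholm alternative invertibility of $B(0)$ is equivalent to $\ker B(0)=0$. Here the Spectral Condition (\re{SC}) enters: any $\psi\in L^2_{-\si}$ with $B(0)\psi=0$ satisfies $\psi=-R_0(0)W\psi$, and a short bootstrap (using that $R_0(0)$ gains two derivatives while $W$ costs at most one and supplies the decay $\langle x\rangle^{-\beta}$ from (\re{V}), together with the Newtonian tail $|x|^{-1}$ of $R_0(0)$) places $\psi$ in $L^2_{-1/2-0}$; hence $\psi\in\cM=0$. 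Thus $\ker B(0)=0$ and $B(0)^{-1}$ is bounded.

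Next I would upgrade this to the full operator $\cL(\cH^0_\si,\cH^2_{-\si'})$. The same smoothing argument shows $R_0(\om)W$ maps $\cH^2_{-\si'}$ compactly into itself --- the multiplication part of $W$ is $C^2$ with decay $\langle x\rangle^{-\beta}$ and the first-order part $i{\bf A}\cdot\na$ costs one derivative which $R_0(\om)$ more than restores --- so $B(\om)$ is Fredholm on $\cH^2_{-\si'}$ with the same trivial kernel, and $B(0)^{-1}$ is bounded there. Since $R_0(\om)\to R_0(0)$ in operator norm by (\re{exp0}) and $W$ is fixed, $B(\om)\to B(0)$ in norm as $\om\to0$; inversion being continuous on the open set of invertible operators, $B(\om)^{-1}\to B(0)^{-1}$ and stays uniformly bounded near $\om=0$. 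Writing
\[
R(\om)-R(0)=B(\om)^{-1}\big(R_0(\om)-R_0(0)\big)+\big(B(\om)^{-1}-B(0)^{-1}\big)R_0(0),
\]
the first term tends to zero in $\cL(\cH^0_\si,\cH^2_{-\si'})$ by (\re{exp0}) (this is exactly where $\si,\si'>1/2$ and $\si+\si'>2$ are used) together with the uniform boundedness of $B(\om)^{-1}$, while the second term tends to zero by norm convergence of the inverses and continuity of $R_0(0)$. This yields (\re{lappexp}), and with $R(0)=B(0)^{-1}R_0(0)$ the continuity of $R(0):\cH^0_\si\to\cH^2_{-\si'}$.

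For the finiteness of $\Si$ I would use that $B(\om)^{-1}$ is meromorphic on $\C\setminus[0,\infty)$ (Gohberg--Bleher, as quoted after (\re{for})), so the poles of $R(\om)$, i.e. the discrete spectrum, are isolated. They cannot accumulate at $\om=0$ by the invertibility of $B(0)$ just proved, nor at any $\lam>0$ since $B(\lam\pm i0)$ is invertible (Theorem \re{LAP} and Lemma \re{pAK}) and $B$ is continuous up to the boundary there, nor at infinity because $\Vert R_0(\om)W\Vert\to0$ by (\re{A0k}) forces $B(\om)$ invertible for $|\om|$ large. Together with the a priori bound that the eigenvalues lie in a bounded interval $[-C,0)$ (since $[-i\na-{\bf A}]^2\ge0$ and $V$ is bounded), this confines $\Si$ to a compact region with no accumulation points, so $\Si$ is finite.

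The step I expect to be the main obstacle is the kernel identification $\ker B(0)=\cM$ in the weighted space: the bootstrap must be carried out carefully because the perturbation $W$ contains the first-order term $i{\bf A}\cdot\na$, so one cannot simply treat $W\psi$ as an $L^2_\si$ function but must track regularity and decay simultaneously through $\psi=-R_0(0)W\psi$. Everything else is the continuity-of-inversion argument, which is routine once the mapping properties of $R_0(\om)W$ on $\cH^2_{-\si'}$ and the norm convergence (\re{exp0}) are in hand.
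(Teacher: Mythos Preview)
Your proposal is correct and follows essentially the same route as the paper: Born splitting, Fredholm at $\om=0$ via Lemma~\re{lA} ii) and the Spectral Condition, then norm--continuity of inversion together with (\re{exp0}). The one noteworthy difference is that the paper works with the \emph{right} splitting (\re{fort2}), $R(\om)=R_0(\om)[1+WR_0(\om)]^{-1}$, so the $\cH^2$ target regularity comes for free from the outer factor $R_0(\om):\cH^0_\si\to\cH^2_{-\si'}$ and only invertibility of $1+WR_0(\om)$ on $L^2_\si$ is needed; your choice of the left splitting forces you to establish compactness of $R_0(\om)W$ on $\cH^2_{-\si'}$, which is true (factor through the compact embedding $\cH^2_{-\si'}\hookrightarrow\cH^{2-\epsilon}_{-\si'-\epsilon}$, then $W$ into $L^2_{\beta-\si'-\epsilon}$, then $R_0$ back to $\cH^2_{-\si'}$) but is an extra step the paper avoids. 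On the other hand, you supply two things the paper's short proof leaves implicit: the bootstrap identifying $\ker B(0)$ in $L^2_{-\si}$ with $\cM$, and the argument for finiteness of $\Si$, which the paper states but does not prove.
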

\begin{proof}
It suffices to consider the case
$$
1/2<\si,\si'<\beta-1/2,~~\si+\si'>2
$$
since asymptotics (\re{lappexp}) hold then for larger $\si,\si'$.
Spectral Condition (\re{SC}) implies that the operators
$[1+R_0(0)W]:L^2_{-\si}\to L^2_{-\si}$
and  $[1+WR_0(0)]:L^2_{\si}\to L^2_{\si}$
are invertible by Lemma \re{lA} ii) and the Fredholm theorem.
Then the operator  $[1+WR_0(\om)]:L^2_{\si}\to L^2_{\si}$
also is invertible
and the operator function $[1+WR_0(\om)]^{-1}$
with the values in $\cL(L^2_{\si}, L^2_{\si})$ is continuous
for small $\om\in \C\setminus[0,\infty)$. Therefore,
convergence (\re{lappexp}) holds by (\re{fort2}) and (\re{exp0}).
\end{proof}

\begin{lemma}\la{lZc}
Let condition (\re{V}) with a $\beta>3$ and
Spectral Condition (\re{SC}) hold. Then
\beqn
\!\!\!\!\!\!\!\!\!
\Vert R(\om)\Vert_{\cL(\cH^0_{\si},\cH^2_{-\si})}\!\!\!&=&\!\!\!\cO(1),
\quad\om\to 0,\quad \om\in \C\setminus[0,\infty),\quad \si>1,
\la{Zcaexp}
\\
\!\!\!\!\!\!\!\!\!\Vert R^{(k)}(\om)\Vert_{\cL(\cH^0_{\si},\cH^2_{-\si})}
\!\!\!&=&\!\!\!\cO(|\om|^{1/2-k}),~~\om\!\to 0,~~\om\in \C\setminus[0,\infty),
~~\si>1/2+k,~~k=1,2. \la{Zca}
\eeqn
\end{lemma}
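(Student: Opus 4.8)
The statement to prove is Lemma \re{lZc}: under the stronger decay $\beta>3$ and the Spectral Condition, we have the bounds \re{Zcaexp} and \re{Zca} on $R(\om)$ and its derivatives near $\om=0$. Let me think about how I would prove this.

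The key structural tool is the alternative Born splitting \re{fort2}: $R(\om)=R_0(\om)[1+WR_0(\om)]^{-1}$. We already know from Lemma \re{lp0} (which needs only $\beta>2$) that $[1+WR_0(0)]$ is invertible on $L^2_\si$ and that $[1+WR_0(\om)]^{-1}$ is continuous in $\cL(L^2_\si,L^2_\si)$ near $\om=0$. So the zeroth-order bound \re{Zcaexp} is essentially immediate: apply \re{fort2}, use the $\cO(1)$ bound $\|R_0(\om)\|_{\cL(\cH^0_\si,\cH^2_{-\si})}=\cO(1)$ from \re{exp0}/\re{dif0} for $\si>1$ (taking $\si=\si'$), and sandwich the continuous inverse in between. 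So we need $\beta>3$ only to get enough room in the weights $\si>1$ and to differentiate; I would record exactly which weight inequalities $\si'+\si''<\beta$ are needed and check they hold.

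For the derivative bounds \re{Zca}, the plan is to differentiate \re{fort2}. Writing $S(\om)=[1+WR_0(\om)]^{-1}$, the Leibniz rule gives $R^{(k)}(\om)=\sum_{j=0}^k\binom{k}{j}R_0^{(k-j)}(\om)S^{(j)}(\om)$, and each $S^{(j)}$ is a finite sum of products of $S$ with factors $WR_0^{(i)}(\om)$, $i\ge 1$ (from differentiating $S=-S(WR_0)'S$ repeatedly). The crucial input is the free estimate \re{dif0}: $\|R_0^{(k)}(\om)\|=\cO(\om^{1/2-k})$ as $\om\to 0$, valid in $\cL(\cH^m_\si,\cH^{m+2}_{-\si})$ for $\si>1/2+k$. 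The worst term in $R^{(k)}$ is the one carrying the full singularity $\om^{1/2-k}$, namely $R_0^{(k)}(\om)S(\om)$, and this produces exactly the claimed order $\cO(|\om|^{1/2-k})$. Every other term distributes derivatives so the total singularity is no worse; I would verify by counting that a product of free-resolvent derivatives of orders summing to $k$ carries singularity $\om^{1/2-\#\text{factors}}\cdot\om^{1/2-\dots}$, which is at worst $\om^{1/2-k}$ once one checks $R_0$ itself (zero derivatives) is $\cO(1)$ not $\cO(\om^{1/2})$, so concentrating all derivatives on a single factor is the dominant configuration.

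The main obstacle is bookkeeping the weighted spaces so that each factor in these composite products maps between compatible weights and each $R_0^{(i)}$ satisfies the hypothesis $\si>1/2+i$ of \re{dif0}. Since the total derivative count is $k\le 2$ and we are given $\si>1/2+k$, there is just enough room: I would choose an intermediate weight and insert a chain $\cH^0_\si\to\cdots\to\cH^2_{-\si}$ in which $W$ shifts the weight by $\beta$ (using \re{V}) and lowers the Sobolev index by one, while each $R_0^{(i)}$ raises it by two and flips the sign of the weight, checking that the accumulated weight conditions are all satisfied precisely when $\beta>3$ and $\si>1/2+k$. The finiteness of $\Si$ follows as in Lemma \re{lp0} from compactness (Lemma \re{lA}) and the Fredholm analyticity of $S(\om)$, so I would dispatch that first. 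The remaining delicate point is confirming that the differentiated inverse $S^{(j)}(\om)$ stays bounded up to $\om=0$, which holds because $S(\om)$ itself is $\cO(1)$ by Lemma \re{lp0} and each differentiated factor $WR_0^{(i)}(\om)$ carries only the free singularity $\om^{1/2-i}$, already accounted for.
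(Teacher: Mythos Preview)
Your proposal is correct and is essentially the same approach as the paper's: both differentiate the Born splitting and reduce everything to the free estimates \re{dif0} plus the $k=0$ boundedness from Lemma \re{lp0}, with $\beta>3$ providing exactly the room needed for the weight bookkeeping. The only cosmetic difference is that you differentiate $R=R_0[1+WR_0]^{-1}$ via Leibniz and track $S^{(j)}$, whereas the paper uses the equivalent identities $R'=(1-RW)R_0'(1-WR)$ and $R''=(1-RW)R_0''(1-WR)-2R'WR_0'(1-WR)$ (note $[1+WR_0]^{-1}=1-WR$, so the expansions coincide term by term); one small wording slip is that $S^{(j)}$ does not literally ``stay bounded'' for $j\ge 1$ but carries the free singularity $|\om|^{1/2-j}$, which your final sentence in fact correctly accounts for.
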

\begin{proof}
Bound (\re{Zcaexp}) holds by Lemma \re{lp0}.
To prove  (\re{Zca}) with $k= 1$ we apply the identity
\be\la{R1ml}
  R'=(1-RW)R_0'(1-WR)=R_0'-RWR_0'-R_0'WR+RWR_0'WR.
\ee
The relation implies (\re{Zca}) with $k=1$ and $\si>3/2$
by (\re{dif0}) with $k=1$ and (\re{Zcaexp}).
Namely, for the first term in the RHS of (\re{R1ml})
this is obvious. Consider the second term.
Choosing $\si'\in (3/2,\beta-3/2)$, we obtain
\be\la{fr}
  \Vert R(\om)WR_0'(\om)\psi\Vert_{\cH^2_{-\si}}
  \le C \Vert WR_0'(\om)\psi\Vert_{L^2_{\si'}}
  \le C_1\Vert R_0'(\om)\psi\Vert_{\cH^1_{\si'-\beta}}
  \le C_2|\om|^{-1/2}\Vert \psi\Vert_{L^2_{\si}}.
\ee
The remaining terms can be estimated similarly.
Hence, (\re{Zca}) with $k=1$ is proved.
\medskip\\
For $k=2$ we apply the formula:
\beqn\la{Rk21}
   R''&=&(1-RW)R_0''(1-WR)-2 R'WR_0'(1-WR)\\
   \nonumber
   &=&R_0''-RWR_0''-R_0''WR+RWR_0''WR-2 R'WR_0'+2 R'WR_0'WR.
\eeqn
Bound (\re{Zca}) with $k=2$ and $\si> 5/2$ for the first term
in the RHS of (\re{Rk21}) follows from (\re{dif0}) with  $k=2$.
The last two terms can be estimated similarly to
(\re{fr}) using (\re{Zcaexp}) and (\re{Zca})  with $k=1$.
Consider the remaining terms.
Using (\re{Zcaexp})  and (\re{dif0}) with $k=2$, we obtain that
\\
a) for $\si'\in(5/2,\beta-1/2)$ the bounds hold
\beqn\nonumber
\Vert RWR_0''\psi\Vert_{\cH^2_{-\si}}
&\le& C\Vert WR_0''\psi\Vert_{L^2_{-\si'+\beta}}
\le C_1\Vert R_0''\psi\Vert_{\cH^1_{-\si'}}
\le C_2|\om|^{-3/2}\Vert \psi\Vert_{L^2_{\si}},\\
\nonumber\\
\nonumber
\Vert R_0''WR\psi\Vert_{\cH^2_{-\si}}
&\le& C|\om|^{-3/2}\Vert WR\psi\Vert_{L^2_{\si'}}
\le C_1|\om|^{-3/2}\Vert R\psi\Vert_{\cH^1_{\si'-\beta}}
\le C_2|\om|^{-3/2}\Vert\psi\Vert_{L^2_{\si}}
\eeqn
by Lemma \re{lp0} since $-\si'+\beta>1/2$ and $\si+\beta-\si'>2$.
\\
b) for $\si'\in(1/2,\beta-5/2)$ the bound holds
\beqn\nonumber
\Vert RWR_0''WR\psi\Vert_{\cH^2_{-\si}}
&\le& C\Vert WR_0''WR\psi\Vert_{L^2_{\si'}}
\le C_1\Vert R_0''WR\psi\Vert_{\cH^1_{\si'-\beta}}\\
\nonumber\\
\nonumber
&\le& C_2|\om|^{-3/2}\Vert WR\psi\Vert_{\cL^2_{-\si'+\beta}}
\le C_3|\om|^{-3/2}\Vert R\psi\Vert_{\cH^1_{-\si'}}
\le C_4|\om|^{-3/2}\Vert\psi\Vert_{L^2_{\si}}
\eeqn
by  Lemma \re{lp0} since $\beta-\si'>5/2$ and $\si+\si'>2$.
Hence,  (\re{Zca}) with $k=2$  is proved.
\end{proof}
\subsection{ High energy decay}
Denote by $R_A(\om)=(H_A-\om)^{-1}$ the resolvent of
the operator $H_A=[-i\na-A(x)]^2$.
In Appendix A we will prove the folowing asymprotics  of $R_A(\om)$
for large $\om$:
\begin{theorem}\la{AJK}
Let $A_j(x)\in C^2(\R^3)$ are real functions and for some $\beta>2$
the bound holds
\be\la{nnA}
  |A(x)|+|\na A(x)|+|\na\na A(x)|\le C\langle x\rangle^{-\beta}
\ee
Then for $\si>1/2$ and
$l=0;1$ the asymptotics hold 
\be\la{A1}
 \Vert R_A(\om)\Vert_{{\cal L}(\cH^0_\si;\cH^{l}_{-\si})}
 ={\cal O}(|\om|^{-\fr{1-l}2}),
  \quad |\om|\to\infty,\quad\om\in\C\setminus[0,\infty).
\ee
\end{theorem}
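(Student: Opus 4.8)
The plan is to transfer the free high-energy bounds (\re{A0k}) to the magnetic resolvent. Write $H_A=H_0+W_A$ with $W_A=A^2+i(\na\cdot A)+2iA\cdot\na$, so that $R_A(\om)=R_0(\om)-R_0(\om)W_AR_A(\om)$. First I would dispose of the zeroth-order part $A^2+i(\na\cdot A)$: by (\re{nnA}) it decays like $\langle x\rangle^{-\beta}$ with $\beta>2$, hence maps $\cH^0_{-\si}\to\cH^0_\si$ for $\si$ close to $1/2$ (as $2\si<\beta$), and composing with $R_0(\om)$ gains a full factor $|\om|^{-1/2}$ by (\re{A0k}) with $l=0$. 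Such terms are harmless and can be absorbed once $|\om|$ is large; as usual it suffices to treat $\si$ slightly above $1/2$, larger $\si$ following by embedding.

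The entire difficulty is the first-order term $2iA\cdot\na$, and I would first make precise why a perturbative argument cannot succeed. Combining (\re{A0k}) at $l=1$ with the bounded map $\na:\cH^1_{-\si}\to\cH^0_{-\si}$, one finds that $\na R_0(\om):\cH^0_\si\to\cH^0_{-\si}$ has norm only $\cO(1)$ — the $|\om|^{-1/2}$ gain of $R_0(\om)$ is exactly consumed by the derivative, reflecting that the free kernel oscillates at frequency $|\om|^{1/2}$. Hence $W_AR_0(\om)$ has weighted norm only $\cO(1)$, and, worse, every iterate $(W_AR_0(\om))^n$ is again $\cO(1)$, because in such a product each factor $\na$ pairs with exactly one resolvent and consumes precisely its decay, leaving no bare resolvent to supply a gain. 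Therefore neither the Neumann series for $[1+W_AR_0(\om)]^{-1}$ nor the even-power device $[1-(W_AR_0(\om))^2]^{-1}$ converges: the first-order magnetic term lies exactly at the critical scaling, which is the main obstacle.

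To get around this I would abandon perturbation and argue directly with $H_A$, exploiting its self-adjoint factorisation $H_A=D_A^*D_A$, $D_A=-i\na-A$. For $u=R_A(\om)f$ solving $(H_A-\om)u=f$, I would run an Agmon-type weighted energy estimate with the dilation multiplier $\fr12(x\cdot(-i\na)+(-i\na)\cdot x)$: the commutator $[H_0,i\,\fr12(x\cdot(-i\na)+(-i\na)\cdot x)]=2H_0$ is positive, while the short-range bounds (\re{nnA}) make the magnetic contributions to the commutators genuinely lower order and controllable. This yields control of $\|D_Au\|_{\cH^0_{-\si}}$ together with the outgoing radiation condition. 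To then extract the sharp rate $\cO(|\om|^{-\fr{1-l}2})$ — in particular the borderline gradient case $l=1$ that the critical scaling obstructs — I would insert the generalized eigenfunction resolution of $H_A$ and reduce the bound to an estimate of the spectral density at energy of order $|\om|$. This is the magnetic analogue of Agmon's Lemma A.3, and it is precisely here that the spectral representation replaces the failed perturbation series.

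In summary, the transfer of (\re{A0k}) and the treatment of the zeroth-order magnetic terms are routine; the crux is the critically-scaled first-order term $2iA\cdot\na$, whose contribution neither decays in the bare weighted norm nor improves under iteration. The whole proof therefore rests on the non-perturbative a priori estimate for $H_A$ and on its spectral resolution, rather than on expanding $R_A(\om)$ around $R_0(\om)$.
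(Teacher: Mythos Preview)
Your diagnosis of the critical-scaling obstacle and your overall two-lemma architecture---first a gradient bound uniform in $\om$, then the $|\om|$-decay via the spectral resolution of $H_A$---match the paper's scheme (Lemmas~\re{Agm2} and~\re{Agm3}). The second step also agrees in substance: the paper proves its analogue of Agmon's Lemma~A.3 by diagonalising $H_A$ through Iwatsuka's spectral representation, reducing the unweighted case $s=0$ to the elementary scalar inequality $(1+\lam^{l/2})^2\le C(b)|\om|^{-(1-l)}(|\lam-\om|^2+\lam)$ for $\lam\ge 0$, and then reaching general weights by commuting with $\rho_\ve^s=(1+|\ve x|^2)^{s/2}$ and absorbing the commutator errors for small $\ve$.

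The genuine difference is in the gradient bound. You propose a dilation/positive-commutator argument; the paper instead proves Lemma~\re{Agm2} by a gauge trick. To estimate the $j$-th component, say $j=1$, one gauges $A_1$ away via $\psi\mapsto e^{i\int A_1\,dx_1}\psi$, turning $(H_A-\om)\psi=f$ into $(-\pa_1^2+\Lam(x_1)-\om)\psi=f$ where $\Lam(x_1)=\sum_{j\ge 2}(-i\na_j-A'_j)^2$ is a nonnegative magnetic Laplacian in the transverse variables (this is the one place the second-derivative decay in (\re{nnA}) is used). Diagonalising $\Lam(x_1)$---which requires $\cH_{\rm sing}=\cH_{\rm pp}=0$, supplied by Ionescu--Schlag and Koch--Tataru---reduces the estimate to a vector-valued one-dimensional instance of Agmon's original Lemma~A.2, uniformly in $\om\in\C$. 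This gauge-and-slice device is the paper's main new idea here. Your Mourre route is plausible in spirit, but as written it does not obviously produce the precise a~priori form $\|D_A\psi\|_{\cH^0_{-\si}}\le C(\si)\|(H_A-\om)\psi\|_{\cH^0_\si}$ with constant independent of \emph{all} $\om\in\C$, which is exactly the input Lemma~\re{Agm3} consumes; standard positive-commutator bounds are local in energy, and the scaling argument that would make them globally uniform (and deliver the gradient rather than just $u$ itself) would need to be spelled out.
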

Now we derive the asymptotics of $R(\om)$ and its derivatives
for large $\om$ from (\re{A1}).

\begin{theorem}\la{HED}
Let (\re{V})with $\beta>3$ and (\re{V1}) with $\beta>2$ hold.
 Then for $k=0,1,2$,
$\si>1/2+k$,  and $l=0,1$, the asymptotics hold \be\la{H}
 \Vert R^{(k)}(\om)\Vert_{{\cal L}(\cH^0_\si;\cH^{l}_{-\si})}
 ={\cal O}(|\om|^{-\fr{1-l+k}2}),
  \quad |\om|\to\infty,\quad\om\in\C\setminus[0,\infty).
\ee
\end{theorem}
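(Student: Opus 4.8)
The plan is to quarantine the dangerous first--order terms inside the magnetic operator $H_A$ and to treat the remaining scalar potential perturbatively, mirroring the low--energy scheme of Lemma~\re{lZc} with the free resolvent $R_0$ replaced by $R_A$. Write $H=H_A+V$, where $V$ is the scalar potential (a bounded multiplication operator with $|V(x)|\le C\langle x\rangle^{-\beta}$, $\beta>3$), the first--order magnetic terms from \re{HW} now being absorbed into $H_A$. Conditions \re{V} and \re{V1} ensure that $A$, $\na A$ and $\na\na A$ all decay like $\langle x\rangle^{-\beta'}$ with $\beta'=\min(\beta,\beta_1)>2$, so hypothesis \re{nnA} holds and Theorem~\re{AJK} is available. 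I would work with the two Born splittings
\be\la{bornA}
R(\om)=R_A(\om)[1+VR_A(\om)]^{-1}=[1+R_A(\om)V]^{-1}R_A(\om),
\ee
together with the algebraic identities $1-VR=[1+VR_A]^{-1}$ and $1-RV=[1+R_AV]^{-1}$.

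First I would settle $k=0$ using the first form in \re{bornA}. Since $V$ is \emph{zeroth order} with decay $\langle x\rangle^{-\beta}$, multiplication by $V$ gains a weight $\beta$, and combined with \re{A1} for $l=0$ this gives $\Vert VR_A(\om)\Vert_{\cL(\cH^0_\si,\cH^0_\si)}=\cO(|\om|^{-1/2})\to 0$ as $|\om|\to\infty$. Hence $[1+VR_A(\om)]^{-1}$ is uniformly bounded on $\cH^0_\si$ by the Neumann series, and \re{bornA} with \re{A1} yields \re{H} for $k=0$, $l=0,1$. In preparation for the derivatives I would also establish the high--energy bounds for $R_A^{(k)}$, $k=1,2$, from Theorem~\re{AJK} and the free bounds \re{A0k} through the second splitting of $R_A$ off $R_0$, with perturbation $W_A:=H_A-H_0=i(\na\cdot{\bf A})+2i{\bf A}\cdot\na+{\bf A}^2$:
\be\la{RAprime}
R_A'=(1-R_AW_A)R_0'(1-W_AR_A)=R_0'-R_AW_AR_0'-R_0'W_AR_A+R_AW_AR_0'W_AR_A,
\ee
and the analogous formula for $R_A''$ patterned on \re{Rk21}. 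The crucial point is that although $W_A$ contains the first--order operator ${\bf A}\cdot\na$, every term in \re{RAprime} carries at least one \emph{differentiated} free resolvent $R_0^{(j)}$, $j\ge1$, which by \re{A0k} supplies an extra factor $|\om|^{-1/2}$ per derivative; this exactly offsets the one--derivative loss of ${\bf A}\cdot\na$. Feeding each $R_A$ factor through $l=1$ (to absorb $\na$) and using the decay of the coefficients of $W_A$ to restore positive weights, one checks that \re{RAprime} is $\cO(|\om|^{-(2-l)/2})$ and the $R_A''$ expansion is $\cO(|\om|^{-(3-l)/2})$, the weight inequalities closing for $\si>1/2+k$ because $\beta>3$.

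Next I would treat $R^{(k)}$ for $k=1,2$ by differentiating \re{bornA} exactly as in \re{R1ml}--\re{Rk21}, now with $R_0$ replaced by $R_A$ and $W$ by the scalar $V$:
\beqn
\la{Rprime}
R'&=&(1-RV)R_A'(1-VR),\\
\la{Rsecond}
R''&=&(1-RV)R_A''(1-VR)-2R'VR_A'(1-VR).
\eeqn
The outer factors $1-RV=[1+R_AV]^{-1}$ and $1-VR=[1+VR_A]^{-1}$ are uniformly bounded for large $|\om|$ (the first via the push--through identity $[1+R_AV]^{-1}=1-R_A[1+VR_A]^{-1}V$), the central factors $R_A^{(k)}$ obey the bounds just established, and each insertion of $V$ supplies the weight gain $\langle x\rangle^{-\beta}$ needed to pass between the positive-- and negative--weight spaces. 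Collecting powers of $|\om|$ gives the total decay $\cO(|\om|^{-(1-l+k)/2})$ of \re{H}, with the weight bookkeeping closing for $\si>1/2+k$ by $\beta>3$.

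The main obstacle is the first--order term ${\bf A}\cdot\na$ in the perturbation. At the level of the undifferentiated bound ($k=0$) it cannot be absorbed by a Neumann series off $R_0$: controlling $\na$ forces $R_0$ to gain two derivatives, i.e.\ the case $l=2$ in \re{A0k}, where the bound $\cO(|\om|^{1/2})$ actually \emph{grows}. This is precisely why the a priori estimate \re{A1} must be proved independently in Appendix~A and why the first--order terms are routed into $H_A$ rather than left in the perturbation. For the derivatives $k=1,2$ the same term is harmless, since it is always tied to a differentiated resolvent carrying the compensating factor $|\om|^{-1/2}$; the only remaining work is the routine but delicate choice of intermediate weights, which is where the hypotheses $\beta>3$ and $\beta_1>2$ are consumed.
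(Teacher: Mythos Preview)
Your argument is correct, and for $k=0$ it coincides with the paper's. For $k=1,2$, however, you take a detour that the paper avoids. You first prove high--energy bounds for $R_A^{(k)}$ via the identity \re{RAprime} (and its second--order analogue), and only then pass from $R_A$ to $R$ using the scalar perturbation $V$ in \re{Rprime}--\re{Rsecond}. The paper instead differentiates the \emph{full} resolvent identity at once: it applies \re{R1ml} and \re{Rk21} with the full perturbation $W$ of \re{HW} (first--order magnetic term included) and the free resolvent $R_0$, bootstrapping from the $k=0$ case of \re{H} together with \re{A0k}. The mechanism you correctly identify---that every occurrence of $W$ is paired with a differentiated $R_0^{(j)}$, whose extra factor $|\om|^{-1/2}$ exactly compensates the derivative loss from ${\bf A}\cdot\na$---works just as well when $W$ is the full perturbation, so there is no need to first isolate $R_A^{(k)}$. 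Your route is therefore valid but runs the perturbation machinery twice; the paper's is shorter. A small cosmetic point: your claim that $1-VR$ and $1-RV$ are ``uniformly bounded'' on $\cH^0_\si$ only closes directly for $\si<\beta-1/2$; for larger $\si$ one should either expand the products term by term (as the paper does) or first prove the bound on the nonempty interval $(k+1/2,\beta-1/2)$ and then extend by monotonicity of the weighted norms.
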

\begin{proof}
{\it Step i)}
For $k=0$ asymptotics (\re{H}) follows from the  Born splitting
$$
 R(\om)=R_A(\om)[1+VR_A(\om)]^{-1}
$$
and (\re{A1}), since the norm of
$[1+VR_A(\om)]^{-1}: \cH^0_{\si}\to \cH^0_{\si}$ is bounded
for large $\om\in\C\setminus[0,\infty)$ and $\si\in(1/2,\beta/2]$.
\\
{\it Step ii)}
For $k=1$ we use identity (\re{R1ml}).
The identity implies  (\re{H}) with $k=1$
and $\si>3/2$ by (\re{A0k})  with $k=1$, and (\re{H}) with $k=0$.
Indeed, this is obvious
for the first term in the RHS of (\re{R1ml}).
Let us consider the second term. Choosing $\si'\in (3/2,\beta-3/2)$,
we obtain for large $\om\in\C\setminus [0,\infty)$
\be\la{3.34}
  \Vert RWR_0'\psi\Vert_{\cH^l_{-\si}}\le C|\om|^{-\fr{1-l}2}
  \Vert WR_0'\psi\Vert_{\cH^0_{\si'}}
  \le C_1|\om|^{-\fr{1-l}2}\Vert R_0'\psi\Vert_{\cH^{1}_{\si'-\beta}}
  \le C_2|\om|^{-\fr{2-l}2}\Vert \psi\Vert_{\cH^0_{\si}},~~l=0;1.
\ee
The remaining terms in the RHS of (\re{R1ml}) can be estimated similarly.
Hence, (\re{H}) with $k=1$ and $\si> 3/2$ is proved.
\medskip\\
{\it Step iii)}
In the case $k=2$ we apply identity (\re{Rk21}).
Asymptotics (\re{H}) with $k=2$
for the first term in the RHS of (\re{Rk21}) follows from
(\re{A0k}) with  $k=2$.
The last two terms can be estimated similarly to
(\re{3.34}) using (\re{A0k}) with  $k=1$ and (\re{H}) with $k=0;1$.

Consider the remaining terms.
Using (\re{H}) with $k=0$
and (\re{A0k}) with $k=2$ and $l=0;1$, we obtain that
\\
a) for $\si'\in(5/2,\beta-1/2)$ the bounds hold
\beqn\nonumber
 \Vert RWR_0''\psi\Vert_{\cH^l_{-\si}} &\le& C|\om|^{-\fr{1-l}2}
 \Vert WR_0''\psi\Vert_{\cH^0_{-\si'+\beta}}\le C_1|\om|^{-\fr{1-l}2}
 \Vert R_0''\psi\Vert_{\cH^{1}_{-\si'}}\le C_2|\om|^{-\fr{3-l}2}
 \Vert \psi\Vert_{\cH^0_{\si}},\\
\nonumber\\
\nonumber
\Vert R_0''WR\psi\Vert_{\cH^l_{-\si}}&\le& C|\om|^{-\fr{3-l}2}
\Vert WR\psi\Vert_{\cH^0_{\si'}}\le C_1|\om|^{-\fr{3-l}2}
\Vert R\psi\Vert_{\cH^{1}_{\si'-\beta}}\le C_2|\om|^{-\fr{3-l}2}
\Vert \psi\Vert_{\cH^0_{\si}},
\eeqn
b) for $\si'\in(1/2,\beta-5/2)$ the bound holds
\beqn\nonumber
 \Vert RWR_0''WR\psi\Vert_{\cH^l_{-\si}}&\le& C|\om|^{-\fr{1-l}2}
\Vert WR_0''WR\psi\Vert_{\cH^0_{\si'}}\le C_1|\om|^{-\fr{1-l}2}
\Vert R_0''WR\psi\Vert_{\cH^{1}_{\si'-\beta}}\\
\nonumber\\
\nonumber
&\le& C_2|\om|^{-\fr{3-l}2}\Vert WR\psi\Vert_{\cH^0_{-\si'+\beta}}
\le C_3|\om|^{-\fr{3-l}2}\Vert R\psi\Vert_{\cH^{1}_{-\si'}}
\le C_4|\om|^{-\fr{3-l}2}\Vert\psi\Vert_{\cH^0_{\si}}.
\eeqn
Hence,
(\re{H}) with $k=2$ is proved.
\end{proof}
\setcounter{equation}{0} \setcounter{theorem}{0}
\setcounter{equation}{0}
\section{Time  Decay}
We prove  time decay (\re{full}) follow the methods of \cite{jeka}.
Under  conditions (\re{V}), (\re{V1}) and  (\re{SC})
the solution $\psi(t)$ to (\re{SE}) admits the representation
\be\la{srelap}
\psi(t)=
\sum_{j=1}^N
 e^{-i\om_j t}P_j\psi(0)+\fr{1}{2\pi i}\int_0^\infty
e^{-i\om t}[R(\om+i0)-R(\om-i0)]\psi(0)~d\om,~~~t\in\R
\ee
for initial state $\psi(0)\in L^2_\si$ with $\si>1$.
The representation follows from  the Cauchy residue theorem,
Theorem \re{LAP}, and  (\re{H}) with $k=0$.
\begin{defin}\la{Xdc}
i) $X_d:=\sum_{j=1}^N P_j L^2$ is the discrete spectral subspace
of $H$ spanned by all eigenfunctions.
\\
ii)$X_c:=X_d^\bot$ is the  orthogonal to $X_d$ subspace of the
continuous spectrum of  $H$.
\end{defin}
\begin{theorem}\la{twed}
Let  conditions (\re{V}), (\re{V1}) and (\re{SC}) hold. Then
\be\la{wedp}
\Vert\psi(t)\Vert_{L^2_{-\si}}\le C\langle t\rangle^{-3/2}
\Vert \psi(0)\Vert_{L^2_\si}, ~~~~~~t\in\R
\ee
for any initial state
$\psi(0)\in X_c\cap L^2_\si$ with $\si>5/2$.
\end{theorem}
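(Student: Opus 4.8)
\subsection*{Proof proposal}

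The plan is to feed the hypothesis $\psi(0)\in X_c$ into the spectral representation (\re{srelap}) and then estimate the surviving oscillatory integral by the Jensen--Kato scheme, now driven by the perturbed resolvent bounds of Section~3. Since $H$ is self-adjoint and the $P_j$ are the orthogonal eigenprojections, $\psi(0)\in X_c=X_d^\bot$ forces $P_j\psi(0)=0$ for every $j$, so the discrete sum in (\re{srelap}) drops out and
\be\la{plan1}
  \psi(t)=\int_0^\infty e^{-i\om t}F(\om)\,d\om,\qquad
  F(\om):=\fr{1}{2\pi i}\big[R(\om+i0)-R(\om-i0)\big]\psi(0)\in L^2_{-\si}.
\ee
It then suffices to bound $\Vert\int_0^\infty e^{-i\om t}F(\om)\,d\om\Vert_{L^2_{-\si}}$ by $C\langle t\rangle^{-3/2}\Vert\psi(0)\Vert_{L^2_\si}$; by the symmetry $t\mapsto-t$ (complex conjugation exchanges the two boundary values) it is enough to treat $t\to+\infty$.

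First I would record the behaviour of $F$ and its $\om$-derivatives, these being the only nontrivial inputs. Near $\om=0$, Lemma~\re{lp0} gives $R(\om\pm i0)\to R(0)$, hence $F(0)=0$; combined with Lemma~\re{lZc} this yields, for $\si>5/2$,
\be\la{plan2}
  \Vert F(\om)\Vert_{L^2_{-\si}}=\cO(\om^{1/2}),\quad
  \Vert F'(\om)\Vert_{L^2_{-\si}}=\cO(\om^{-1/2}),\quad
  \Vert F''(\om)\Vert_{L^2_{-\si}}=\cO(\om^{-3/2}),\quad \om\to0+,
\ee
the first estimate following by integrating the second from $0$. At high energy Theorem~\re{HED} with $l=0$ gives $\Vert F^{(k)}(\om)\Vert_{L^2_{-\si}}=\cO(|\om|^{-(1+k)/2})$ for $k=0,1,2$; in particular $F,F'\to0$ and $\Vert F''\Vert$ is integrable at infinity. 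On every compact subinterval of $(0,\infty)$ the function $F$ is $C^2$ with values in $L^2_{-\si}$ by Theorem~\re{LAP} and the limiting absorption principle. The borderline requirement $\si>5/2$ is dictated precisely by the $k=2$ estimates in (\re{Zca}) and (\re{H}).

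Next I would split the integral with a cutoff $\chi\in C_0^\infty[0,\infty)$, $\chi\equiv1$ near $0$, and treat the two pieces separately. For the high-energy piece $\int_0^\infty e^{-i\om t}(1-\chi)F\,d\om$ I would integrate by parts twice in $\om$; the boundary terms vanish because $(1-\chi)$ kills the endpoint $\om=0$ and $F,F'\to0$ at infinity, while the remaining integral of $\fr{d^2}{d\om^2}\big[(1-\chi)F\big]$ converges absolutely thanks to $\Vert F''\Vert=\cO(\om^{-3/2})$. This produces a bound $\cO(t^{-2})$, faster than needed. For the low-energy piece $\int_0^\infty e^{-i\om t}\chi F\,d\om$ I would substitute $\om=\mu^2$, turning it into the Fresnel-type integral $\int_0^\infty e^{-i\mu^2 t}G(\mu)\,d\mu$ with $G(\mu)=2\mu\,\chi(\mu^2)F(\mu^2)$. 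The bounds (\re{plan2}) give $G(0)=G'(0)=0$ and $\sup_\mu\Vert G''(\mu)\Vert_{L^2_{-\si}}<\infty$ near $\mu=0$, so $G$ vanishes to second order at the boundary stationary point $\mu=0$, and the standard Banach-valued oscillatory-integral estimate (as in \ci{jeka}) yields $\cO(t^{-3/2})$. Adding the two pieces gives the decay for large $t$; for bounded $t$ one uses $\Vert\psi(t)\Vert_{L^2_{-\si}}\le\Vert\psi(t)\Vert_{L^2}=\Vert\psi(0)\Vert_{L^2}\le\Vert\psi(0)\Vert_{L^2_\si}$ by unitarity of the dynamics, and the two combine into the stated $\langle t\rangle^{-3/2}$ in (\re{wedp}).

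I expect the genuine obstacle to be the low-energy contribution rather than the high-energy one. Because the sharp second-derivative bound is only $\Vert F''\Vert=\cO(\om^{-3/2})$, which is \emph{not} integrable near $\om=0$, a direct double integration by parts in $\om$ breaks down at the threshold, and one is forced to exploit the square-root structure of the spectral density via $\om=\mu^2$. The delicate points are to verify that after this substitution $G$ really does have two vanishing derivatives at $\mu=0$ with bounded second derivative --- equivalently that $\Vert F(\om)\Vert\sim\om^{1/2}$ with controlled corrections --- and to justify the $C^2$-regularity of $F$ carried through the change of variable, for which the regular-case hypothesis (\re{SC}) and the asymptotics (\re{lappexp}), (\re{Zca}) are essential.
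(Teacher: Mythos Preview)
Your proposal is correct and follows essentially the same route as the paper: drop the discrete sum via $\psi(0)\in X_c$, split by a smooth cutoff into low and high energies, handle the high-energy piece by two integrations by parts using (\re{H}) with $k=2$, and treat the low-energy piece by the Jensen--Kato device driven by (\re{Zcaexp})--(\re{Zca}). The only cosmetic differences are that the paper quotes Jensen--Kato's Lemma~10.2 (stated here as Lemma~\re{jk}) as a black box whereas you unpack it through the substitution $\om=\mu^2$, and you make explicit the unitarity argument for bounded $t$ that the paper leaves tacit.
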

\begin{proof}
Since $P_j\psi(0)=0$, then (\re{srelap}) reduces to
\be\la{srelar}
\psi(t)=\fr{1}{2\pi i}\int_0^\infty
e^{-i\om t}[R(\om+i0)-R(\om-i0)]\psi(0)~d\om,~~~t\in\R.
\ee
To deduce (\re{wedp}), introduce the partition of unity
$1=\zeta_l(\om)+\zeta_h(\om)$, $\om\in\R$,
where
$$
  \zeta_l\in C_0^\infty(\R),~~~~~\zeta_l(\om)=\left\{\ba{ll}
  1,&|\om|\le \ve/2  \\ 0,&|\om|\ge\ve \ea\right.
$$
with a small $\ve>0$. Then (\re{srelar}) reads
\beqn\la{srelab}
  \psi(t)=\psi_l(t)+\psi_h(t)&=&
  \fr{1}{2\pi i}\int_0^\infty \zeta_l(\om)
  e^{-i\om t}[R(\om+i0)-R(\om-i0)]\psi(0)~d\om
  \nonumber\\
  \nonumber\\
  &+&
  \fr{1}{2\pi i}\int_0^\infty \zeta_h(\om)
  e^{-i\om t}[R(\om+i0)-R(\om-i0)]\psi(0)~d\om.
\eeqn
Integrating twice by parts and using  (\re{H}) with $k=2$
we obtain for the ``high energy component'' $\psi_h(t)$ the decay
$$
 \Vert\psi_h(t)\Vert_{L^2_{-\si}}\le C\langle t\rangle^{-2}
 \Vert\psi(0)\Vert_{L^2_\si}.
$$
To estimate  the ``low energy component'' $\psi_l(t)$
we  apply the following  lemma of Jensen-Kato
\cite [ Lemma 10.2]{jeka} to the vector function
$F(\om):=\zeta_l(\om)[R(\om+i0)-R(\om-i0)]\psi(0)$
with the values in  the Banach space  ${\bf B}=L^2_{-\si}$ with $\si>5/2$:
\begin{lemma}\la{jk}
Let $F\in C(0, a; {\bf B})$ satisfy
\be\la{Zc}
 F(0)=F(a)=0;~~~\Vert F''(\om)\Vert_{\bf B}\le C|\om|^{-3/2},~~\om\in(0,a).
\ee
Then
\be\la{Zyg}
  \Vert\int\limits_0^a e^{-i\omega t}F(\omega)d\omega\Vert_{\bf B}
  =\cO (t^{-3/2}),\quad t\to\infty.
\ee
\end{lemma}
All the conditions of Lemma \re{jk} are satisfied
due to (\re{Zcaexp})-(\re{Zca}). Then
\be\la{lec}
  \Vert\psi_l(t)\Vert_{\cL^2_{-\si}}\le C\langle t\rangle^{-3/2}
  \Vert\psi(0)\Vert_{L^2_\si}.
\ee
\end{proof}
\appendix

\setcounter{section}{0}
\setcounter{equation}{0}
\protect\renewcommand{\thesection}{\Alph{section}}
\protect\renewcommand{\theequation}{\thesection.\arabic{equation}}
\protect\renewcommand{\thesubsection}{\thesection.\arabic{subsection}}
\protect\renewcommand{\thetheorem}{\Alph{section}.\arabic{theorem}}
\section{Proof of Theorem \re{AJK}}
Here we extend the  Agmon-Jensen-Kato estimates \cite[(A.2')]{A} and
\cite[(8.1)]{jeka}
to the resolvent $R_A(\om)$.
The operator $H_A$ for $A(x)\in C^1(\R^3)$  is a symmetric
operator in the Hilbert space $L^2:=L^2(\R^3)$
with the domain $\cD:=C_0^\infty(\R^3)$. Moreover, $H_A$ is
nonnegative, hence it admits the unique
selfadjoint extension which is its closure,
by the Friedrichs theorem. Denote by
$H_A^{1/2}$  the nonnegative square root of $H_A$  which is also selfadjoint
operator in $L^2$, so
$$
  \Vert H_A^{1/2} u\Vert
  =\Vert\nabla_A u\Vert, ~~~~~~~~~~~u\in\cD
$$
where $\nabla_A=\nabla-iA$, and
$\Vert\cdot\Vert$ stands for the norm in $L^2$.
\begin{lemma}\la{eqv}
Let $A(x)\in C(\R^3)$ and $|A(x)|\le C\langle x\rangle^{-\beta}$
with a $\beta\ge 1$.
Then for any $\si\in\R$, the bounds hold
\be\la{eqv1}
   \Vert\nabla u\Vert\le C_1 \Vert\nabla_A u\Vert
   \le C_2 \Vert\nabla u\Vert,\quad u\in \cD.
\ee
\end{lemma}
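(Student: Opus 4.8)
The plan is to reduce both inequalities in (\re{eqv1}) to the scalar Hardy inequality on $\R^3$, using the decay exponent $\beta\ge 1$ to dominate the magnetic potential by the Coulomb weight $|x|^{-1}$. Since $\langle x\rangle\ge|x|$ and $\beta\ge 1$, the hypothesis gives the pointwise bound $|A(x)|\le C\langle x\rangle^{-\beta}\le C\langle x\rangle^{-1}\le C|x|^{-1}$, so that for any $u\in\cD$ one has $\Vert Au\Vert\le C\,\Vert\,|x|^{-1}u\,\Vert$. I would then recall the classical Hardy inequality in $\R^3$, namely $\int_{\R^3}|v(x)|^2|x|^{-2}\,dx\le 4\int_{\R^3}|\nabla v(x)|^2\,dx$ for $v\in C_0^\infty(\R^3)$, which is obtained by integration over the transverse variables exactly as in the proof of Lemma \re{pAK}. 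Note that the weight parameter $\sigma$ does not enter the $L^2$ norms appearing in (\re{eqv1}), so the constants produced below are independent of it.

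For the right-hand inequality I would write $\nabla_A u=\nabla u-iAu$ and apply the triangle inequality, $\Vert\nabla_A u\Vert\le\Vert\nabla u\Vert+\Vert Au\Vert$. The term $\Vert Au\Vert$ is controlled directly by Hardy's inequality applied to $u$: $\Vert Au\Vert\le C\,\Vert\,|x|^{-1}u\,\Vert\le 2C\,\Vert\nabla u\Vert$. This yields $\Vert\nabla_A u\Vert\le(1+2C)\Vert\nabla u\Vert$, which is the upper bound with an explicit constant.

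For the left-hand inequality one writes instead $\nabla u=\nabla_A u+iAu$, so that $\Vert\nabla u\Vert\le\Vert\nabla_A u\Vert+\Vert Au\Vert$; the essential point now is to bound $\Vert Au\Vert$ by $\Vert\nabla_A u\Vert$ rather than by $\Vert\nabla u\Vert$. Here I would invoke the diamagnetic inequality $|\nabla|u|(x)|\le|\nabla_A u(x)|$ for a.e. $x$, which for $u\in\cD$ is immediate from the polar decomposition $u=|u|e^{i\theta}$: a direct computation gives $|\nabla_A u|^2=|\,\nabla|u|\,|^2+|u|^2|\nabla\theta-A|^2\ge|\,\nabla|u|\,|^2$. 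Applying Hardy's inequality to the real nonnegative function $|u|$ then gives $\Vert\,|x|^{-1}u\,\Vert=\Vert\,|x|^{-1}|u|\,\Vert\le 2\,\Vert\,\nabla|u|\,\Vert\le 2\,\Vert\nabla_A u\Vert$, whence $\Vert Au\Vert\le 2C\,\Vert\nabla_A u\Vert$ and finally $\Vert\nabla u\Vert\le(1+2C)\Vert\nabla_A u\Vert$. Alternatively, one may quote directly the D'Ancona--Fanelli magnetic Hardy inequality \ci{DF}, which asserts $\int_{\R^3}|u|^2|x|^{-2}\,dx\le 4\int_{\R^3}|\nabla_A u|^2\,dx$ and reduces this step to a single line.

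The main obstacle is precisely this lower bound. The naive attempt to estimate $\Vert Au\Vert$ by $\Vert\nabla u\Vert$ through scalar Hardy would be circular, since $\Vert\nabla u\Vert$ is exactly the quantity we are trying to bound from below. The diamagnetic inequality (equivalently, the magnetic Hardy inequality of \ci{DF}) is the tool that breaks this circularity, comparing the gradient of $|u|$ with the full magnetic gradient. All manipulations are carried out for $u\in\cD=C_0^\infty(\R^3)$, where the polar decomposition and the integrations underlying Hardy's inequality are legitimate, so no density argument is needed for the statement as formulated.
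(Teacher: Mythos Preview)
Your proof is correct and essentially identical to the paper's: both inequalities are reduced to controlling $\Vert Au\Vert$ by a Hardy-type estimate, using the classical Hardy inequality for the upper bound and the magnetic Hardy inequality of \ci{DF} for the lower bound. The only cosmetic difference is that you first supply a diamagnetic-inequality derivation of the magnetic Hardy estimate before noting that one may simply cite \ci{DF}, whereas the paper quotes \ci{DF} directly.
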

\begin{proof}
We apply the magnetic version of  the Hardy inequality (\cite{DF}):
\be\la{M}
  \Vert u\Vert_{L^2_{-1}} \le
  4\Vert\nabla_A u\Vert, \quad u\in\cD.
\ee
 Writing
$ \nabla u = \bigl( \nabla -iA(x) \bigr) u + iA(x)u $,
we obtain by  (\ref{M}),
$$
 \Vert\nabla u\Vert
 \le\Vert\bigl(\nabla-iA(x)\bigr) u\Vert +\Vert Au\Vert
 \le\Vert\nabla_Au\Vert +C\Vert u\Vert_{L^2_{-1}}
 \le C_1\Vert\nabla_A u\Vert.
$$
Further,
$$
 \Vert\nabla_A u\Vert=\Vert\bigl(\nabla\!-\! iA(x)\bigr)
u\Vert
\le  \Vert\nabla u\Vert+\Vert Au\Vert
\le  \Vert\nabla u\Vert+C\Vert u\Vert_{L^2_{-1}}
\le C_2\Vert\nabla u\Vert
$$
where the last bound follows from
(\ref{M}) with $A(x)\equiv 0$.
\end{proof}
We reduce Theorem \ref{AJK} to  certain lemmas.
The first lemma generalizes Lemma A.2 from \ci{A}.
\begin{lemma}\la{Agm2} Let the conditions of Theorem \ref{AJK} hold.
Then for $\sigma>1/2$, the bound holds
 \be\la{A2}
   \Vert (-i\na-A) \psi \Vert_{\cH^0_{-\si}}
   \le C(\si)\Vert (H_A -\om)\psi \Vert_{\cH^0_\si},\quad \psi\in\cD,
   \quad\om\in\C.
 \ee
\end{lemma}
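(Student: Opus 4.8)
Since $(-i\na-A)\psi=-i\na_A\psi$ with $\na_A=\na-iA$, the assertion is the weighted covariant gradient bound $\Vert\na_A\psi\Vert_{\cH^0_{-\si}}\le C(\si)\Vert g\Vert_{\cH^0_\si}$, $g:=(H_A-\om)\psi$. The plan is to prove it by a \emph{magnetic multiplier identity}, not by perturbing off $-\De$: the first--order term $iA\cdot\na$ in $H_A-H_0$ does not decay in the relevant norms, so every manipulation must be carried out covariantly in $\na_A$. As orientation, pairing $g$ with $\langle x\rangle^{-2\si}\bar\psi$ and integrating by parts through $H_A=\na_A^*\na_A$ gives
\[
\Vert\na_A\psi\Vert^2_{\cH^0_{-\si}}
=\rRe\,(g,\langle x\rangle^{-2\si}\psi)
-\rRe\,(\na_A\psi,(\na\langle x\rangle^{-2\si})\psi)
+\rRe(\om)\,\Vert\psi\Vert^2_{\cH^0_{-\si}} .
\]
The first two terms are harmless (Cauchy--Schwarz, the second absorbed into the left side), but $\rRe(\om)\Vert\psi\Vert^2_{\cH^0_{-\si}}$ is of the same size as the quantity being estimated and is not controlled by $g$; removing it is the crux, and it rules out any selfadjoint weight as multiplier.

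To cancel the $\rRe(\om)$ contribution I would instead pair $g$ with the \emph{anti-selfadjoint} first--order magnetic multiplier $T\psi=a\cdot\na_A\psi+\tfrac12({\rm div}\,a)\psi$, where $a(x)=\gamma(|x|)\,x/|x|=\na\phi$ is the radial field with $\gamma(r)=\int_0^r\langle s\rangle^{-2\si}\,ds$; here $\gamma$ is bounded (as $2\si>1$), nonnegative and nondecreasing. Because $T^*=-T$, the number $(\psi,T\psi)$ is purely imaginary, so
\[
2\,\rRe\,(g,T\psi)=(\psi,[H_A,T]\psi)-2\,{\rm Im}(\om)\,{\rm Im}\,(\psi,T\psi):
\]
the dangerous $\rRe(\om)$ drops out and only an ${\rm Im}(\om)$ term remains. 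The commutator yields the positive magnetic virial (Morawetz) form
\[
(\psi,[H_A,T]\psi)
=2\!\int\!\gamma'(r)\,|\hat x\cdot\na_A\psi|^2\,dx
+2\!\int\!\fr{\gamma(r)}{r}\,|\na_A\psi-\hat x\,(\hat x\cdot\na_A\psi)|^2\,dx
-\fr12\!\int\!(\De\,{\rm div}\,a)\,|\psi|^2\,dx+\cR,
\]
whose two leading (nonnegative) terms dominate $\int\langle x\rangle^{-2\si}|\na_A\psi|^2=\Vert\na_A\psi\Vert^2_{\cH^0_{-\si}}$, since $\gamma'(r)=\langle r\rangle^{-2\si}$ and $\gamma(r)/r\gtrsim\langle r\rangle^{-2\si}$ by the choice of $\gamma$.

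It then remains to dispose of the remainders. The curvature term $\cR$ is built from the magnetic field $\pa_jA_k-\pa_kA_j$, which is $O(\langle x\rangle^{-\beta})$ with $\beta>2$ by (\re{nnA}); together with the zeroth--order term $\int(\De\,{\rm div}\,a)|\psi|^2$, whose coefficient is $O(\langle x\rangle^{-2})$, it is bounded by a small multiple of the bulk plus $\int\langle x\rangle^{-2}|\psi|^2$, and the latter is absorbed by the magnetic Hardy inequality (\re{M}) (with Lemma~\re{eqv} used to pass between $\na$ and $\na_A$ where needed). For \emph{real} $\om$ the identity already closes: ${\rm Im}(\om)=0$, and $2\,\rRe(g,T\psi)\le 2\Vert g\Vert_{\cH^0_\si}\Vert T\psi\Vert_{\cH^0_{-\si}}$ is split so that the $\na_A\psi$ part is absorbed into the bulk and the rest leaves $C\Vert g\Vert^2_{\cH^0_\si}$, giving (\re{A2}). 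For complex $\om$ the surviving term $2\,{\rm Im}(\om)\,{\rm Im}(\psi,T\psi)$ is controlled through the dissipation identity ${\rm Im}(\om)\Vert\psi\Vert^2_{L^2}={\rm Im}(g,\psi)$ (valid since $H_A$ is selfadjoint), which bounds $|{\rm Im}(\om)|\,\Vert\psi\Vert_{L^2}$ by $\Vert g\Vert_{L^2}\le\Vert g\Vert_{\cH^0_\si}$, after balancing the weights carried by $a$.

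The main obstacle is precisely the \emph{uniformity in} $\om$: the naive energy identity retains $\rRe(\om)\Vert\psi\Vert^2_{\cH^0_{-\si}}$, and the whole point of the anti-selfadjoint multiplier $T$ is that, by $T^*=-T$, the $\om$--dependence collapses to a single ${\rm Im}(\om)$ term that the dissipation relation can handle. The secondary difficulty is that the non-decaying first--order magnetic perturbation forbids a reduction to the scalar Agmon Lemma A.2, so the entire argument is run covariantly and the magnetic Hardy inequality (\re{M}) is what absorbs the curvature and zeroth--order remainders.
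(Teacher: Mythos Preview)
Your route is genuinely different from the paper's. The paper does \emph{not} run a covariant Morawetz argument: instead it fixes a direction (say $j=1$), gauges away $A_1$ via $\psi\mapsto e^{i\Phi}\psi$ with $\Phi=\int A_1\,dx_1$ (this is where the second--derivative hypothesis on $A$ enters), so that the operator becomes $-\pa_1^2+\Lambda(x_1)$ with $\Lambda(x_1)=\sum_{k\ne1}(-i\pa_k-A'_k)^2$ a nonnegative selfadjoint operator in the transverse variables. It then diagonalises $\Lambda(x_1)$ by Iwatsuka's spectral representation and reduces (\ref{A2}) to a vector--valued version of the one--dimensional Agmon Lemma~A.2. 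Your multiplier scheme, if it closes, would be more self--contained and would only use the curvature $B=dA$, hence only first derivatives of $A$; the paper's reduction, by contrast, leans on external structural input (absence of singular spectrum and of embedded eigenvalues for $\Lambda$, plus the spectral representation).

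There is, however, a real gap in your treatment of the ${\rm Im}\,\om$ term. The dissipation identity gives $|{\rm Im}\,\om|\,\Vert\psi\Vert_{L^2}\le\Vert g\Vert_{L^2}$, but to bound
\[
\big|\,2\,{\rm Im}(\om)\,{\rm Im}(\psi,T\psi)\,\big|
\ \le\ 2\,|{\rm Im}\,\om|\,\Vert\psi\Vert_{L^2}\,\Vert a\cdot\na_A\psi\Vert_{L^2}
\ \le\ C\,\Vert g\Vert_{L^2}\,\Vert\na_A\psi\Vert_{L^2}
\]
you still need the \emph{unweighted} $\Vert\na_A\psi\Vert_{L^2}$, which is not controlled by the Morawetz bulk $\Vert\na_A\psi\Vert_{\cH^0_{-\si}}$. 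Your phrase ``after balancing the weights carried by $a$'' does not rescue this: $a=\gamma(r)\hat x$ is merely bounded (indeed $\gamma(r)\to\gamma_\infty>0$), so it contributes no decay; and any attempt to put the weight on $\psi$ instead leaves $|{\rm Im}\,\om|\,\Vert\psi\Vert_{\cH^0_\si}$, which the $L^2$ dissipation identity cannot reach. The same issue recurs if one tries to absorb the curvature and $\De\,{\rm div}\,a$ remainders via the magnetic Hardy inequality (\ref{M}): that inequality is unweighted and yields $\Vert\na_A\psi\Vert_{L^2}$ on the right, again not the weighted bulk. (For $\si\in(1/2,3/2)$ one can check $-\tfrac12\De\,{\rm div}\,a\ge0$, so that particular term actually has a good sign and needs no Hardy; but this does not fix the ${\rm Im}\,\om$ problem.) Multiplier proofs of uniform resolvent bounds do exist, but closing them uniformly in $\om\in\C$ requires an additional device---typically a carefully weighted combination of the commutator identity with both the real and imaginary parts of the energy identity---that your sketch does not supply.
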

\begin{proof}
It suffices to estimate each component:
\be\la{A2j}
  \Vert (-i\na_j-A_j(x))  \psi \Vert_{\cH^0_{-\si}}
  \le C(\si)\Vert (H_A -\om)\psi \Vert_{\cH^0_\si},\quad j=1,2,3.
\ee
Consider $j=1$ for the concreteness.
Applying the gauge transformation $\psi(x)\mapsto\psi(x)e^{i\Phi(x)}$
with $\Phi(x)=\ds\int A_1(x)dx_1$, we reduce the estimate to the case
$A_1(x)=0$ and $A'_j(x)=A_j(x)-\na_j\Phi(x)$ instead of $A_j(x)$ for $j\ne 1$.
By (\re{nnA}), for 
the real functions $A'_j(x)$ the bound holds
\be\la{potdecg}
  |A'_j(x)|+|\na A'_j(x)|\le C\langle  x'\rangle^{-\beta},
  ~~~~~~x':=(x_2,x_3)\in\R^{2}.
\ee
Note that this is the only place
we need  the condition (\re{V1}) on the second derivatives of $A_j$.
Now (\re{A2j}) reduce to  bound
\be\la{A2ecrs}
  \int\langle x_1\rangle^{-2\si}|\na_1  \psi(x)|^2 dx
  \le C(\si)\int
  \langle x_1\rangle^{2\si} |(-\na_1^2 +\Lam(x_1)-\om)\psi(x)|^2dx.
\ee
where
$$
  \Lam(x_1)=\sum_2^3[-i\na_j-A'_j(x)]^2
$$
is a nonnegative selfadjoint operator in $L^2(\R^{2})$.
From \ci[Theorem 1.3 (e)]{IS06} it follows that $\cH_{sing}(\Lam(x_1))=0$.
Moreover,  $\cH_{pp}(\Lam(x_1))=0$ since
eigenvalues $\lam_j>0$ are forbidden by the
results of \ci{KT}, while
$\lam_j\le 0$ are
absent since the operator is nonnegative definite.

Denote by $S$  the circle $\{\theta\in R^2:|\theta|=1\}$ and by
$X$ the Hilbert space $L^2(S)$.
Since $L^2(\R^{2})=\cH_{ac}(\Lam(x_1))$ then
there exists a unitary generalized Fourier transform
\be\la{FF}
F(x_1): L^2(\R^{2})\to L^2([0,\infty),d\lam,X)
\ee
such that  functions $\psi\in C_0^\infty(\R^2) $ and
the operator $\Lam(x_1)$ admit the spectral
representations
\be\la{Msr}
F(x_1):\psi(x_1,\cdot) \mapsto \ti\psi(x_1,\lam)~;
~~~~~~
F(x_1)[\Lam(x_1)\psi(x_1,\cdot)] = \lam\ti\psi(x_1,\lam),~~\lam\ge 0.
\ee
The transform
exists by \ci[Theorem 4.2]{Iw82} since  $\cH_{sing}(\Lam(x_1))=0$ and 
$\cH_{pp}(\Lam(x_1))=0$.

Now (\re{A2ecrs}) is equivalent  to bound
\be\la{A2ecrr}
 \int\langle x_1\rangle^{-2\si}\Vert\na_1  \ti\psi(x_1,\lam)
 \Vert_X^2~d\lam dx_1\le C(\si)\int\langle x_1\rangle^{2\si}
 \Vert (-\na_1^2 +\lam-\om)\ti\psi(x_1,\lam)\Vert_X^2~d\lam dx_1.
\ee
Finally, (\re{A2ecrr}) follows
by the Fubini theorem from vector valued version of \ci[Lemma A.2]{A}
(see also \ci[Lemma 4, p. 442]{RS}).
\end{proof}

Next lemma generalizes Lemma A.3 from \ci{A}.
\begin{lemma}\la{Agm3}
For any $s\in\R$,  $b>0$, and $\psi\in \cD$,
the  estimate holds
\be\la{A3}
  \Vert \psi \Vert_{\cH^l_s}\le C(s,b)|\om|^{-\frac{1-l}2}
  \Big(\Vert(H_A-\om)\psi \Vert_{\cH^0_s}
  +\Vert(-i\na-A) \psi \Vert_{\cH^0_s}\Big),
  \quad \om\in\C,~~|\om|\ge b,\quad l=0,1.
\ee
\end{lemma}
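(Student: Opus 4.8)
The plan is to mimic the original Agmon argument for the free resolvent, where the key estimate controls $\|\psi\|_{\cH^l_s}$ by the source $\|(H_0-\om)\psi\|_{\cH^0_s}$ at high energy. The complication here is that $H_A$ is not a constant-coefficient operator, so a direct Fourier-side proof is unavailable. The natural strategy is to reduce everything to the scalar estimate by expanding $\|(H_A-\om)\psi\|^2$ against a suitable weight and integrating by parts, extracting the leading $|\om|$ dependence from the spectral gap. I would first treat the two cases $l=0$ and $l=1$ separately, noting that the $l=1$ case involves the full first-order norm $\|\langle x\rangle^s\langle\na\rangle\psi\|$, which by Lemma \ref{eqv} (equivalence of $\nabla$ and $\nabla_A$) can be replaced up to constants by $\|\psi\|_{\cH^0_s}+\|(-i\na-A)\psi\|_{\cH^0_s}$; this is precisely why the magnetic gradient term appears on the right-hand side.

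The core of the argument would be the energy/resolvent identity. Writing $g=(H_A-\om)\psi$, I would test against $\langle x\rangle^{2s}\psi$ (or a mollified variant) and integrate by parts to obtain
\be\nonumber
  \|\nabla_A\psi\|_{\cH^0_s}^2 - \om\|\psi\|_{\cH^0_s}^2
  = (g,\langle x\rangle^{2s}\psi) + (\text{commutator terms}).
\ee
Taking real and imaginary parts exploits the fact that $\om\in\C$ with $|\om|\ge b$: the imaginary part $\rIm\,\om\cdot\|\psi\|_{\cH^0_s}^2$ controls $\psi$ when $\om$ is off the real axis, while the real part controls the magnetic gradient. The commutator terms, arising from $[\langle x\rangle^{2s},H_A]$, are lower order in derivatives and can be absorbed using Lemma \ref{eqv} together with Cauchy--Schwarz. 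From these two identities one extracts
\be\nonumber
  |\om|\,\|\psi\|_{\cH^0_s}^2
  \le C\,\|g\|_{\cH^0_s}\,\|\psi\|_{\cH^0_s} + \text{(absorbable terms)},
\ee
which after dividing yields the $l=0$ bound with the factor $|\om|^{-1/2}$. The $l=1$ bound then follows by combining the $l=0$ estimate with the companion control on $\|\nabla_A\psi\|_{\cH^0_s}$ obtained from the real part, the extra half-power of $|\om|^{-1/2}$ being lost in passing from $\psi$ to its gradient; this matches the exponent $-(1-l)/2$ for $l=0,1$.

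The main obstacle I anticipate is the magnetic gradient term on the right-hand side and the handling of $H_A$ as a genuine operator rather than a Fourier multiplier. Concretely, the commutators $[\langle x\rangle^{2s},H_A]$ produce terms of the form $\langle x\rangle^{2s-1}\nabla_A\psi$ and $\langle x\rangle^{2s-2}\psi$; the first of these must be absorbed back into the principal $\|\nabla_A\psi\|_{\cH^0_s}$ term, which requires the weight exponent to drop by one and thus hinges on the equivalence in Lemma \ref{eqv} rather than on any smallness of $A$. Ensuring that all such terms are strictly subordinate at high energy—so that they can be moved to the left-hand side without spoiling the $|\om|$ coefficient—is the delicate point, and it is precisely where the presence of the first-order perturbation $i\bf A\cdot\na$ (which, as the introduction stresses, does not decay in the norms used for ordinary perturbation theory) forces one to keep $\nabla_A$ intact throughout rather than treating $A$ as a perturbation of $\nabla$.
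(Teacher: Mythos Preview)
Your multiplier/energy approach can be made to work, but it is not the route the paper takes, and your outline leaves a genuine gap precisely where $\om$ lies on (or near) the positive real axis. You say the imaginary part controls $\|\psi\|_{\cH^0_s}$ ``when $\om$ is off the real axis,'' but the lemma must hold for all $\om\in\C$ with $|\om|\ge b$, uniformly as $\rIm\om\to 0$; for $\om>0$ the imaginary-part identity is vacuous and your displayed inequality $|\om|\,\|\psi\|^2\le C\|g\|\,\|\psi\|+(\text{absorbable})$ cannot be extracted from it. What actually happens is that the real part gives
\[
  \rRe\om\,\|\psi\|_{\cH^0_s}^2 \le \|\nabla_A\psi\|_{\cH^0_s}^2
  +\|g\|_{\cH^0_s}\|\psi\|_{\cH^0_s}+C\|\nabla_A\psi\|_{\cH^0_s}\|\psi\|_{\cH^0_s},
\]
and the term $\|\nabla_A\psi\|_{\cH^0_s}^2$ is \emph{not} absorbable into the left: it is precisely why $\Vert(-i\na-A)\psi\Vert_{\cH^0_s}$ sits on the right of (\ref{A3}), and it---not ``dividing by $\|\psi\|$''---is the source of the factor $|\om|^{-1/2}$ for real positive $\om$. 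Once this role is made explicit your argument closes.

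The paper proceeds differently. For $s=0$ it invokes the spectral theorem for the nonnegative self-adjoint operator $H_A$ (via the Iwatsuka representation) to reduce the unweighted estimate to the elementary scalar inequality
\[
  (1+\lambda^{l/2})^2\le C(b)\,|\om|^{-(1-l)}\bigl(|\lambda-\om|^2+\lambda\bigr),
  \qquad \lambda\ge 0,\ |\om|\ge b,
\]
verified by splitting into the cases $|\lambda-\om|<|\om|/2$ and $|\lambda-\om|\ge|\om|/2$; this handles all $\om$, including real ones, in one stroke. For general $s$ the paper does not commute with $\langle x\rangle^{2s}$ directly. Instead it introduces the equivalent weight $\rho_\ve(x)=(1+|\ve x|^2)^{1/2}$, applies the $s=0$ case to $\rho_\ve^s\psi$, and uses that $|\pa^\alpha\rho_\ve^s|\le \ve^{|\alpha|}C\rho_\ve^s$, so every commutator carries an explicit factor $\ve$ and is absorbed for $\ve$ small, independently of $|\om|$. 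Your route is more elementary in that it avoids the spectral resolution, but it requires the sector-by-sector analysis of $\om$ and the Young splittings you omitted; the paper's route handles all $\om$ uniformly through the scalar inequality at the price of invoking the functional calculus for $H_A$.
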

\begin{proof}
{\it Step i)}
First consider $s=0$.
By (\re{eqv1}),
in this case (\re {A3}) is equivalent to estimate
\be\la{A3e}
 \Vert \psi \Vert^2+\Vert H_A^{l/2}\psi \Vert^2\le C(b)|\om|^{-({1-l})}
 \Big(\Vert(H_A-\om)\psi \Vert^2+\Vert H_A^{1/2}\psi\Vert^2\Big),
 \quad l=0,1.
\ee
We will deduce (\re{A3e}) from bound
\be\la{A4}
(1+\lam^{l/2})^2\le C(b)|\om|^{-(1-l)} (|\lam-\om|^2+\lam),
\quad \lam\ge 0,\quad \om\in\C,~~~|\om|\ge b,~~~l=0,1.
\ee
In the case $l=1$ the bound is trivial, and in the case
$l=0$ it is evident   separately for $|\lam-\om|<|\om|/2$
and for  $|\lam-\om|>|\om|/2$.

For the selfadjoint operator $H_A$ and functions $\psi\in \cD$
the spectral representations of type (\re{Msr}) also hold:
$$
F:\psi \mapsto \ti\psi(\lam),
~~~~~~
F: H_A\psi \mapsto \lam\ti\psi(\lam)
$$
Here $F: L^2(\R^{3})\to L^2([0,\infty),d\lam,Y)$ is a unitary operator,
$Y=L^2(S)$, and $S$ is the sphere $\{\theta\in\R^3: |\theta|=1\}$.

Multiplying both sides of (\ref{A4}) by $\Vert\ti\psi(\lam)\Vert_Y^2$
and integrating in $\lam$, we obtain
(\re{A3e}).
\medskip

{\it Step ii)}
Now consider an arbitrary $s\in\R$.
For an $\ve>0$,
denote $\rho_\ve(x)=(1+|\ve x|^2)^{1/2}$.
Then (\ref{A3}) is equivalent to the estimate:
\be\la{A3a}
\sum_{|\al|\le l}\Vert\rho_\ve^s\pa^\al\psi\Vert
\le C(s,b)|\om|^{-\frac{1-l}2}
  \Big(\Vert\rho_\ve^s(H_A-\om)\psi \Vert
  +\Vert\rho_\ve^s
(-i\na-A) \psi \Vert\Big),~~~~~~|\om|\ge b,
\ee
since for any fixed $\ve >0$
the weighted norm with  $\rho_{\ve}(x)$
is equivalent to the weigted norm with $\rho_{1}(x)$ defined in (\re{norm}).
We apply (\ref{A3e}) to $\rho_\ve^s(x)\psi(x)$ and obtain
\be\la{A5a}
\sum_{|\al|\le l}\Vert\pa^\al[\rho_\ve^s\psi]\Vert
\le C(b)|\om|^{-\frac{1-l}2}
 \Big(\Vert(H_A-\om)[\rho_\ve^s\psi] \Vert+\Vert (-i\na-A)[\rho_\ve^s
\psi]\Vert\Big),\quad l=0,1.
\ee
To deduce (\ref{A3a}) from (\ref{A5a}),
we should commute the multiplicators $\rho_\ve$ with  differential
operators. For example, consider the commutators
\be\la{comm}
  \pa^{\al}(\rho_{\ve}^s\psi )-\rho_{\ve}^s\pa^{\al}\psi
  =\sum\limits_{0\le\gamma_j\le\al_j,~|\gamma|~\ge 1}
  C_{\al,\gamma}\pa^{\gamma}\rho_{\ve}^s\cdot \pa^{\al-\gamma}\psi.
\ee
The commutators are
small and their contributions are negligible for small $\ve$.
Namely,
$$
 |{\na_j}\rho_\ve^s(x)|=\big|\frac s2(1+|\ve x|^2)^{s/2-1}2\ve^2x_j\big|
 \le \frac{|s|}2(1+|\ve x|^2)^{s/2-1}\ve(1+\ve^2x_j^2)\le \ve C\rho_\ve^s(x)
$$
where $C=C(s)$. Similarly, we have
$$
 |\pa^{\al}\rho_{\ve}^s(x)|\le \ve^{|\al|}C\rho_{\ve}^s(x),
 \quad x\in\R^3,\quad 0\le|\al|\le 2.
$$
Hence, (\ref{comm}) implies that
\be\la{A7}
  \Vert \pa^{\al}(\rho_{\ve}^s\psi )-\rho_{\ve}^s\pa^{\al}\psi \Vert
  \le\ve C_1\sum\limits_{|\gamma|\le|\al|-1}\Vert\rho_\ve^s\pa^\gamma\psi\Vert.
\ee
Therefore,
\beqn\la{A81}
  \Vert(H_A-\om)(\rho_{\ve}^s\psi )-\rho_{\ve}^s(H_A-\om)\psi \Vert
  &\le&\ve C_2
  \sum\limits_{|\gamma|\le 1}\Vert\rho_\ve^s \pa^\gamma \psi \Vert,
  \\
  \Vert(-i\na-A)(\rho_{\ve}^s\psi )-\rho_{\ve}^s(-i\na-A)\psi \Vert
  &\le&\ve C_3\Vert\rho_\ve^s \psi \Vert\la{A8}.
\eeqn
\medskip

{\it Step iii)} Now we can prove (\ref{A3a}).
First, we prove it for $l=0$.
Applying  (\ref{A5a}),
we obtain by (\ref{A81}) and (\ref{A8}) that
\beqn\nonumber
  \Vert \rho_{\ve}^s\psi \Vert&\le& C(b)
  \frac 1{\sqrt{|\om|}}\Big(\Vert(H_A-\om)(\rho_{\ve}^s\psi )\Vert
  +\Vert {(-i\na-A)}(\rho_{\ve}^s\psi )\Vert\Big)\\
  \nonumber
  &\le& C(b)\frac 1{\sqrt{|\om|}}\Big(\Vert\rho_{\ve}^s
  (H_A-\om)\psi \Vert+\ve C_2(\Vert \rho_{\ve}^s\psi \Vert
  +\Vert \rho_{\ve}^s{(-i\na-A)}\psi \Vert)\\
  \nonumber
  &+&\Vert \rho_{\ve}^s{(-i\na-A)}\psi \Vert
  +\ve C_1\Vert \rho_{\ve}^s\psi \Vert\Big)\\
  \nonumber
  &\le& C(b)\fr 1{\sqrt{|\om|}}\Big(\Vert\rho_{\ve}^s
  (H_A-\om)\psi \Vert
  +\Vert \rho_{\ve}^s{(-i\na-A)}\psi \Vert\Big)
  +\frac 1{\sqrt \om}\ve C_1(b)\Vert \rho_{\ve}^s\psi \Vert.
\eeqn
Choosing $\ve>0$ small enough such that
$\ve C_1(b)/\sqrt{|\om|}<1$, we obtain
$$
  \Vert \rho_{\ve}^s\psi \Vert\le C_2(b)\frac 1{\sqrt{|\om|}}
  \Big(\Vert\rho_{\ve}^s(H_A-\om)\psi \Vert
  +\Vert \rho_{\ve}^s(-i\na-A)\psi \Vert\Big).
$$
Hence,
(\ref{A3a}) with $l=0$ follows. \\
Finally, we prove (\ref{A3a}) for $l=1$.
Applying (\ref{A5a}), we obtain by
(\ref{A7}) with $|\al|= 1$ and (\ref{A81}), (\ref{A8}),
that
\beqn\nonumber
  \sum\limits_{|\al|\le 1}\Vert \rho_{\ve}^s\pa^{\al}\psi \Vert
  &\le&\sum\limits_{|\al|\le 1}\Vert \pa^{\al}(\rho_{\ve}^s\psi )\Vert
  +\ve C_1\Vert\rho_\ve^s \psi \Vert\\
  \nonumber
  &\le& C(b)\Big(\Vert(H_A-\om)(\rho_{\ve}^s\psi )\Vert
  +\Vert {(-i\na-A)}(\rho_{\ve}^s\psi )\Vert\Big)
  +\ve C_1\Vert\rho_\ve^s \psi \Vert\\
  \nonumber
  &\le& C(b)\Big(\Vert\rho_{\ve}^s(H_A-\om)\psi \Vert
  +\Vert \rho_{\ve}^s{(-i\na-A)}\psi \Vert\Big)
  +\ve C_3(b)
 \sum\limits_{|\ga|\le 1}\Vert \rho_{\ve}^s\pa^{\ga}\psi \Vert.
\eeqn
Choosing $\ve>0$ small enough, we obtain
$$
 \sum\limits_{|\al|\le 1}\Vert \rho_{\ve}^s\pa^{\al}\psi \Vert
 \le C_4(b)\Big(\Vert\rho_{\ve}^s(H_A-\om)\psi \Vert
 +\Vert \rho_{\ve}^s(-i\na-A)\psi \Vert\Big)
$$
that implies (\ref{A3a}) with $l=1$.
Lemma \re{Agm3} is proved.
\end{proof}

{\bf Proof of Theorem \ref{AJK}}
Combining (\ref{A3}) with $s=-\si$ and (\ref{A2}), we obtain
for all $\psi \in {\cal D}$
\beqn\nonumber
  \Vert \psi \Vert_{\cH^l_{-\si}}
  &\le& C(\si,b)|\om|^{-\fr{1-l}2}\Big(\Vert (H_A-\om)\psi
  \Vert_{\cH^0_{-\si}}
  +C(\si)\Vert (H_A-\om)\psi \Vert_{\cH^0_\si}\Big)\\
  \nonumber
  &\le& C_1(\si,b)|\om|^{-\fr{1-l}2}\Vert (H_A-\om)\psi \Vert_{\cH^0_\si}
\eeqn
and then Theorem \re{AJK} is proved.


\end{document}